\newtheorem{theorem}{Theorem}
\newtheorem{lemma}{Lemma}
\newtheorem{prop}{Proposition}
\newtheorem{remark}{Remark}
\newtheorem{conjecture}{Conjecture}
\newtheorem{example}{Example}
\definecolor{myblue}{RGB}{80,80,160}
\definecolor{mygreen}{RGB}{80,160,80}
\title[Optimal Colorings of Max $k$-Cut Game]{Optimal Colorings of Max $k$-Cut Game}
\author[Andrea Garuglieri Dario Madeo Chiara Mocenni Giulia Palma and Simone Rinaldi]{}
\email{garuglieri(at)student.unisi.it}
\email{dario.madeo(at)unisi.it}
\email{chiara.mocenni(at)unisi.it}
\email{giulia.palma2(at)unisi.it}
\email{rinaldi(at)unisi.it}
\thanks{$^*$ Corresponding author: Chiara Mocenni}
\begin{document} %\pagenumbering{arabic}
	% Node styles
% 	\tikzstyle{Nd_D}=[fill=red, draw=black, shape=circle]
% 	\tikzstyle{Nd_nd}=[fill={rgb,255: red,12; green,255; blue,0}, draw=black, shape=circle]
% 	\tikzstyle{D}=[fill={rgb,255: red,7; green,168; blue,255}, draw=black, shape=circle]
% 	\tikzstyle{Node}=[fill={rgb,255: red,205; green,0; blue,125}, draw=black, shape=circle]
% 	\tikzstyle{Clu}=[fill=white, draw=black, shape=circle]
	
% 	% Edge styles
% 	\tikzstyle{Enter}=[-, draw={rgb,255: red,184; green,62; blue,255}]
% 	\tikzstyle{Exit}=[-, draw={rgb,255: red,255; green,160; blue,43}]
% 	\tikzstyle{Same}=[-]
% 	\tikzstyle{Might_exist}=[-, draw={rgb,255: red,150; green,150; blue,150}]
% 	\tikzstyle{Might_not_exist}=[-, draw={rgb,255: red,255; green,191; blue,191}]

% 	\definecolor{myblue}{RGB}{80,80,160}
% 	\definecolor{mygreen}{RGB}{80,160,80}
	
	\maketitle
	
% Enter the first author's name and address:
\centerline{\scshape Andrea Garuglieri, Dario Madeo, Chiara Mocenni$^*$, Giulia Palma and Simone Rinaldi}
\medskip
{\footnotesize
% please put the address of the first author
 \centerline{Department of Information Engineering and Mathematics}
   \centerline{University of Siena}
   \centerline{Via Roma 56, 53100, Siena, Italy}
} % Do not forget to end the {\footnotesize by the sign }

\bigskip

% The name of the associate editor will be entered by an editorial staff
% "Communicated by the associate editor name" is not needed for special issue.
 \centerline{(Communicated by the associate editor name)}	
	
	\begin{abstract}
		We investigate {\it strong Nash equilibria} in the {\it max $k$-cut game} on an undirected and unweighted graph with a set of $k$ colors, where vertices represent players and the edges indicate their relations. Each player $v$ chooses one of the available colors as its own strategy, and its payoff (or utility) is the number of neighbors of $v$ that has chosen a different color. Such games are significant since the model loads of real-worlds scenario with selfish agents and, moreover, they are related to fundamental classes of games.
		Few results are known concerning the existence of strong equilibria in max $k$-cut games in this direction. In this paper we make some progress in the understanding the properties of strong equilibria.  
		In particular, our main result is to show that optimal solutions are
		7-strong equilibria. This means that in order a coalition of nodes is able to deviate and drive the system towards a different  configuration, i.e. a different coloring, a number of nodes of the coalition strictly larger than 7 is necessary. We also conjecture that in a generic graph with $n$ nodes, any optimal coloring is also a $n$-strong equilibrium. Furthermore, we prove some properties of minimal subsets with respect to a strong deviation, showing that each of their nodes will deviate towards
		the color of one of their neighbors.
	\end{abstract}

	\section{Introduction}
	
% \dario{We consider the \textit{max k-cut game}, that is played on an undirected edge-weighted graph, where the $n$ nodes correspond to the players and the edges capture their mutual connections.}
The max $k$-cut problem consists in assigning colors, or applying a cut, to the nodes of a graph in a way that neighbors nodes have possibly different colors, thus ensuring the highest heterogeneity of colors in the graph. 

 The max $k$-cut problem has been extensively studied due to the great interest for its real-world applications. For example, consider radio towers as players. Their goal is selecting a frequency different from the ones of neighboring radio-towers, because in this way they minimize the interference. Another possible scenario is given by a set of companies that have to decide which product to manufacture in order to maximize their revenue, and to do so they have to minimize the number of competitors (for example the ones that are in the same geographic area) that produce the same product. 

In \cite{KARP} it has been shown that finding an optimal solution for it is $NP$-complete. 
Modern approaches use approximation algorithms \cite{frieza,theta} or  heuristic-based ones \cite{QUBO,memes} in order to find a good enough solution, trading accuracy for computational time.	
See \cite{comput} for a quick excursus on the subject and \cite{vilmar} for a more thorough exploration.

The \textit{max k-cut game} constitutes a strategic version of the max k-cut problem. One of the main problems for the max $k-$cut game is to prove the existence of \textit{Strong Nash Equilibria} (briefly, \textit{SE}) \cite{aumann-se-2}. Strong equilibrium corresponds to have colorings in which no coalition, assuming the actions of its complements as given, can cooperatively deviate in a way that benefits all of its members, i.e. every player of the coalition strictly improves its utility. Indeed, in this line of research the main focus is on equilibrium concepts which are resilient to deviations of groups. From this point of view, the concept of Strong Equilibrium is relevant. A weaker version is the \textit{q-Strong Equilibrium} (briefly, \textit{q-SE}), for some $q \leq n$, in which only coalitions of at most $q$ players are allowed to cooperatively change their strategies. 
Trivially, the $1-SE$ is equivalent to the \textit{Nash equilibrium} (briefly, \textit{NE}), and the $n-SE$ is equivalent to the $SE$.
Concerning the existence of strong equilibria in max k-cut games, not much is known. 
In \cite{gourves-monnot-1} it has been proved that an \textit{optimal strategy profile} (also called \textit{optimal coloring}), i.e. a coloring that maximizes the sum of the players’ utilities (equivalently, a coloring that maximizes the $k-$cut), is an SE for the max $2-$cut game. Then, the authors showed that it is a 3-SE, for the max $k-$cut game, for any $k \geq 2$, but an optimal strategy profile is not necessarily a 4-SE, for any $k \geq 3$. 

Later, in \cite{gourves-monnot-2} it has been shown that, if the number of colors is at least the number of players minus two, then an optimal strategy profile is an SE. 
Lastly, they showed that the dynamics, in the event that at each step a coalition can deviate in such a way that all of its members strictly improve their utility by changing strategy, can cycle. 
%From this it follows that no \textit{strong potential function} (that will be defined in the sequel) can exist for the game, and therefore the existence of a SE cannot be proved by simply exhibiting it. We recall that strong potential functions are one of the main tools used to prove the existence of a SE.
An immediate consequence of the above results is that it seems hard to understand whether a SE always exists for max $k-$cut games. 
The authors in \cite{gourves-monnot-2} conjecture that a SE always exists for the \textit{max k-cut game}. It is conjectured that any optimal coloring is an SE. The present paper is a step ahead in this direction.

The most important existing result has been provided in \cite{mt}. In particular, the authors showed that on undirected unweighted graphs, optimal colorings are 5-Strong Equilibria (5-SE), i.e. colorings in which no coalition of at most 5 vertices can profitably deviate.
%They also showed that the cut value is not a strong potential function and that, if the graph has some particular properties, an optimal coloring is a strong equilibrium, as it has been defined in \cite{aumann-se-1} and in \cite{aumann-se-2}. 
In \cite{smorodinskis} this game has been extended to hypergraphs, by examining two possible extensions of the payoff function.

In this paper we consider a \textit{max k-cut game} played by $n$ individuals or players. The individuals are assumed to be arranged on an undirected and unweighted graph; specifically, nodes of the graph represent the individuals, while the edges describe the connections among them. Each edge has its own weight represented by a positive real number. The strategy space of each player is composed by a set of $k$ available colors (i.e. $\{1, \dots, k \}$). For the sake of simplicity, we assume that the color set is the same for each player. Given a strategy profile or a coloring 
(i.e.  the sequence of colors chosen by players), the utility (or payoff) of a player $g$ is the sum of the weights of edges $\{g,v\}$ incident to $g$, such that the color chosen by $g$ is different from the one chosen by $v$.
When all weights associated to links are equal to $1$, then the payoff corresponds to the number of neighbors with color different from its color. Each player is selfish, and then its objective is to maximize its own utility. 
The main problem concerning a max $k$-cut game is related to the possibility for players to achieve autonomously a social optimum (i.e. maximize the cut value by themselves) rather than forcing individuals by an external regulator. In other words, the max $k$-cut problem is to partition the $n$ nodes of a graph in $k$ subsets such that the sum of the weights of the edges that connect nodes belonging to the same subset is minimized.
Indeed, in such games on graphs it is beneficial for players to anti-coordinate their choices with the ones of their neighbors by selecting different colors. Therefore, the players may attempt to increase their utility by coordinating their choices in groups, called \textit{coalitions}.  

% \dario{In other words, the max $k$-cut problem is to partition the $n$ nodes of a graph in $k$ subsets such that the sum of the weights of the edges that connect nodes belonging to the same subset is minimized.}{}

\paragraph{Our results}
In our work, we extend the main results of \cite{mt}, by showing important properties of minimal subsets with respect to a strong deviation and proving that optimal colorings are 7-{\it SE}. 

This fact has many relevant conceptual consequences and applications. Indeed, we extend previous results by showing that optimal colorings are robust and resilient against larger coalitions aimed at forcing groups of players to selfishly diverge from the optimal equilibrium. We also provide a conjecture that, whether it will be proved true, ensures the same robustness and resilience with respect to even larger dimensions of the coalitions.

The article is structured as follows.
First, in the next section, we give the definitions useful for our study, we point out some relevant results and we introduce the main problem.
Section 3 is devoted to some properties of minimal subsets with respect to a strong deviation and to an alternative formula for computing the cut value increase due to a deviation, along with an extension to undirected weighted graphs. 
Then, in Section 4, we prove that optimal colorings are 7-Strong equilibria by exploiting the results of the previous section. 
We conclude the article dealing with issues stemming from the approach used to prove our main results and pointing out some open problems, in  Section 5.

%FINE INTRODUZIONE

\section{Preliminaries}
Let $G=\langle V, E\rangle$ be an undirected unweighted graph, where $V$ is a finite set of $n$ vertices, and $E$ is the set of $m$ edges, i.e. a collection of couples of $V$. We assume that no self-loops are present, and that, given two nodes, there is at most one edge connecting them.
The graph $G$ can be also represented by the adjacency matrix $A = \{a_{v,j}\}$, where $a_{v,j} = 1$ if there is an edge connecting $v$ and $j$. The adjacency matrix of an undirected graph is symmetric.

%grado nodo
For $v\in V$, the \textit{degree} of the vertex $v$ is $\displaystyle\delta(v) =
			\big|\{j\in V: \{v,j\}\in E\}\big|$.

Given $G$ and a set of colors $K = \{1,\dots, k\}$, 	the degree of $v$ in the coloring $\sigma$ with respect to the color $a$ is $\delta(v,\sigma, a):=\big|\{j\in V: \{v,j\}\in E, \sigma_j=a\}\big|$.		
		
The \textit{max k-cut problem} consists of partitioning the vertices of $G$ into $k$ subsets, denoted by $V_1, \dots, V_k$, such that the number of nodes having neighbors in different sets is maximized. The \textit{max k-cut game} constitutes a strategic version of the max k-cut problem, as it is defined below. There are $|V|$ players, each of which have the same \textit{strategy set}, that is the set of colors $K$.
%, and each vertex of $G$ is controlled by exactly one rational player. 

A \textit{strategy profile}, also called \textit{coloring}, denoted by $\sigma \in K^n$, is a labeling of vertices of $G$ in which a color $\sigma_v$ is assigned to each player $v$. A strategy profile can be seen as a vector $\sigma \in K^n$ containing the strategies chosen by each vertex.

Given a graph $G$ and a coloring $\sigma \in K^n$, first we define \textit{cut of a graph} $G$ the set $E(\sigma):= \lbrace\{i,j\}\in E:\ \sigma_i\neq\sigma_j\rbrace$. The cardinality of $E(\sigma)$ is called \textit{cut value}, or \textit{size of the cut} and it is denoted by $S(\sigma)$.
The \textit{cut difference} between the coloring $\gamma$ and another coloring $\sigma$ is $\Delta S(\sigma,\gamma):=S(\gamma)-S(\sigma)$.
Referring to the coloring $\sigma$, for each $v\in V$, the \textit{utility}, or \textit{payoff}, of player $v$ is defined as $\mu_v(\sigma)= \big|\lbrace u\in V:\
					\{u,v\}\in E,\ \sigma_u\neq\sigma_v\rbrace\big|$.
Using the formalization introduced by the adjacency matrix $A$, we can also write the payoff as: \begin{equation}\label{eqn:payoff_adj}\mu_v(\sigma) = \sum_{\substack{u \in V \\ \sigma_u \neq \sigma_v}} a_{v,u}.
\end{equation}
Thanks to symmetry of $A$, we have that, given a coloring $\sigma$ and two disjoint subsets of the nodes $V_1, V_2 \subseteq V$, we have that:

\begin{equation}\label{symmetry}\sum_{v \in V_1} \sum_{\substack{u \in V_2 \\ \sigma_u \neq \sigma_v}}a_{v,u} = \sum_{u \in V_2} \sum_{\substack{v \in V_1 \\ \sigma_v \neq \sigma_u}}a_{u,v}.\end{equation}
Both equations  \eqref{eqn:payoff_adj} and \eqref{symmetry} will be used in the proof of the main results of this work.
The \textit{social welfare} of a coloring $\sigma$ is defined as the sum of players’ utilities, that is $SW(\sigma) = \sum_{v\in V}\mu_v(\sigma)=2S(\sigma)$.
Moreover, an \textit{optimal strategy profile}, also called \textit{optimal coloring}, is a strategy profile which maximizes $SW(\sigma)$.

For example, consider a graph $G=(V,E)$ as depicted in Figure \ref{1ins}.a, and let assume that the set of colors is $K=\{red, blu, green\}$. The six players are $v_1, v_2, v_3, v_4, v_5, v_6$. Figures \ref{1ins}.a and \ref{1ins}.b shows two different colorings: $\sigma= (red, blue,red,blue,green,red)$ and $\gamma= (green, blue,green,blue,red,red)$,
Let us focus on $\sigma$. The cut of $G$ is $E(\sigma)=\{\{v_1,v_2\},\{v_1,v_4\}, \{v_2,v_3\}, \{v_3, v_4\}, \{v_3, v_5\}, \{v_4,v_5\}, \{v_4, v_6\}, \{v_5,v_6\}\}$, then the cut value is $S(\sigma)=8$. 
The payoff $\mu_1(\sigma) = 2$, since only two of its three neighbors have a color different from its own. Reasoning in this way, we can find the profits of the other five nodes: $\mu_2(\sigma)=2$, 
$\mu_3(\sigma)=3$, $\mu_4(\sigma)=4$, $\mu_5(\sigma)=3$, $\mu_6(\sigma)=2$. 
Therefore, the social welfare of $\sigma$ is $SW(\sigma)=16$.

\begin{figure}[h!]
\begin{center}
\includegraphics[width=12cm]{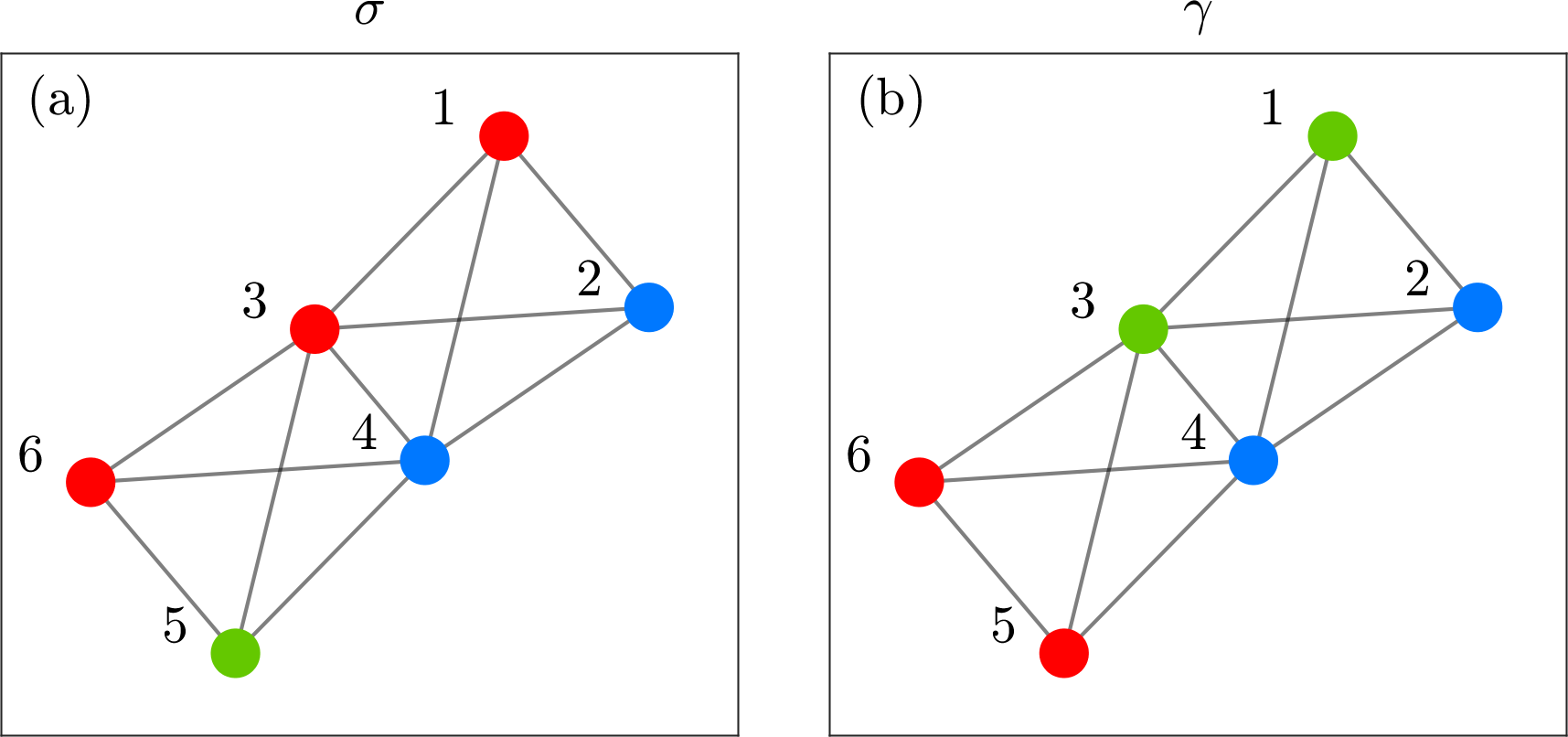}
  \caption{Main notions on colorings and coalitions. In this case, the deviating coalition from $\sigma$ (subplot a) to $\gamma$ (subplot b) is $C = \{v_1, v_3, v_5\}$.}  
  \label{1ins} 
\end{center}
\end{figure}

Given a coalition $C\subseteq V$ and a coloring $\sigma$, first we define the set of colors used by the coalition $C$ in $\sigma$ as $K_C(\sigma)=\{a \in K : ~ \exists v\in C ~ : \sigma_v = a\}$. Then, for each color $a$, we define the set of players in $C$ that have color $a$ in $\sigma$ as $C_{a}(\sigma) = \{v \in C : \sigma_v = a\}$.
We define as $G(C)$ the restriction of the graph $G$ to the members of the coalition $C$, i.e. the set of vertices of $G(C)$ is $C$, while the set of edges is composed by the links connecting only nodes in $C$.

Moreover, if for any player $v\in C$, $\gamma_v \neq\sigma_v$, and for any player $w \notin C$, $\gamma_w=\sigma_w$, then we say that the coalition $C$ \textit{deviates} from the coloring $\sigma$ towards a coloring $\gamma$.
In particular, when a coalition $C$ deviates in such a way that all of its members strictly improve their utility, then we say it performs a \textit{strong improvement}, or \textit{strong deviation}. A strong deviation is said to be \textit{minimal} if no proper subset of the deviating coalition can perform an improvement. The coalition itself is said to be minimal. 
Furthermore, given a coloring $\sigma$, if a coalition $C$ induces a new coloring $\gamma$ after deviating, then we say that the set of edges $E(\sigma)\backslash E(\gamma)$ \textit{enters} the cut, and that the set of edges $E(\sigma)\backslash E(\gamma)$ \textit{leaves} the cut.
For instance, let us consider again the graph $G$ and coloring $\sigma$ reported in Figure \ref{1ins}.a. Let $C$ be the coalition composed of the players $\{v_1,v_3,v_5\}$. The set of colors used by the coalition $C$ in $\sigma_1$ is $K_C(\sigma)=\{red,green\}$. Moreover, the set of players in $C$ that have color $red$ in $\sigma_1$ is $C_{red}(\sigma) = \{v_1,v_3\}$, and the set of the ones that have color $green$ instead is the singleton $C_{green}(\sigma) = \{v_5\}$.
The coloring $\gamma$ reported in Figure \ref{1ins}.b is a deviation from the coloring $\sigma$, which is not strong since the members of coalitions do not improve their utility.

Given a coloring $\sigma$, a player $v$ and a coalition $C$, we denote by $\sigma_{-v}$ and $\sigma_{-C}$ the coloring $\sigma$ besides the strategy played by $v$ and by $C$, respectively. We indicate with $\mu_{C}(\sigma) = \sum_{v \in C} \mu_v(\sigma)$ the total payoff obtained by coalition $C$. Furthermore, $\sigma_C$ denotes the coloring $\sigma$ restricted only to players in $C$.
%and we use (σ−v,σ(v)) and (σ−C,σC) to denote σ.
We say that $\sigma$ is a \textit{Nash Equilibrium} ($NE$) if no player can improve its payoff by deviating unilaterally from $\sigma$, in formulas $\mu_v(\sigma_{-v},a) \leq \mu_v(\sigma)$ for each player $v \in V$ and for each color $a \in K$. 
For each $1 \leq q \leq n$, $\sigma$ is a $q-$\textit{Strong Equilibrium} ($q-SE$) if there exists no coalition $C$ with $|C| \leq q$ that can cooperatively deviate from $\sigma_C$ to $\gamma_C$ in such a way that every player in $C$ strictly improves its utility. 
Therefore, Nash equilibrium (NE) is equivalent to $1-$strong equilibrium. Each optimal equilibrium is also NE \cite{mt}.
We simply call \textit{strong equilibrium} ($SE$) an $n-$strong equilibrium.  

Given two colorings $\gamma$ and $\sigma$, and $C\subseteq V$, we denote
$\displaystyle  P_C(\sigma,\gamma):=\big|\lbrace\{v,j\}\in E:
				v,j\in C,\ \sigma_v=\sigma_j,\ \gamma_v\neq\gamma_j\rbrace\big|$.
Notice that, $P_C(\sigma, \gamma)$ can be rewritten as follows:

\begin{equation}\label{eqn:PC}
P_C(\sigma, \gamma) = \displaystyle\frac{1}{2}\sum_{v \in C} \sum_{\substack{j \in C \\ \gamma_j \neq \gamma_v \\ \sigma_j = \sigma_v}}a_{v,j}.
\end{equation}				
%{\color{red} The complete graph with $n$ vertices is denoted by $\mathcal{K}_n$. SPOSTARE DOVE SI PARLA DI GRAFI E DIRE CHE COS'E' UN GRAFO CONNESSO E COMPONENTE ISOLATA}

%\comment{SERVE?}
Lastly, we say that $H$ is an \textit{isolated component} of the graph $G$, if $G(H)$ is a connected subgraph of $G$ and, for each vertex $i\in V(H)$, for each node $j\notin V(H)$, $\{i,j\}\notin E$.\\

%\textbf{RIVEDERE IL DISCORSO DI PINT E DI ISOLATED COMPONENT}

Now, we report some important results of \cite{mt}, that will be useful in the sequel.   
The first is a proposition that we will use alongside the pigeon-hole principle. The other one is the main result of \cite{mt}, that we are extending in the present work.

\begin{prop}[Prop. 1a from \cite{mt}]
\label{prop1fiorda}
				Let $\sigma$ be a coloring for a graph $G$ and let $C\subseteq V$
				be a minimal subset that can perform a strong deviation from $\sigma$
				to another coloring $\gamma$; then $K_C(\sigma)=K_C(\gamma)$.
			\end{prop}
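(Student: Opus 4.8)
The plan is to argue by contradiction from the minimality of $C$: assuming $K_C(\sigma)\neq K_C(\gamma)$, I will exhibit a proper subset of $C$ that already performs a strong improvement. Throughout I take $\sigma$ to be a Nash equilibrium (the setting in which the proposition is applied), so that no player improves by a unilateral move and in particular $|C|\ge 2$. The engine of the argument is the elementary remark that a player colored $c$ has payoff $\delta(v)-\delta(v,\cdot,c)$; hence, comparing two colorings that both assign $v$ the color $c$, if the set of $c$-colored neighbours of $v$ shrinks (or stays equal) then $\mu_v$ cannot decrease. I will apply this after replacing the full deviation $\gamma$ by one in which only a chosen subset of $C$ moves while the remaining members revert to $\sigma$.

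First I rule out \emph{gained} colors, i.e. $K_C(\gamma)\subseteq K_C(\sigma)$. Suppose $b\in K_C(\gamma)\setminus K_C(\sigma)$ and let $D:=C_b(\gamma)\neq\emptyset$. Consider the coloring $\gamma'$ in which the members of $D$ move to $b$ while everybody else --- in particular $C\setminus D$ --- keeps its color in $\sigma$. Since $b\notin K_C(\sigma)$, no node of $C\setminus D$ is colored $b$ in $\sigma$, and since $D=C_b(\gamma)$ no node of $C\setminus D$ is colored $b$ in $\gamma$ either; therefore for each $v\in D$ the $b$-colored neighbours of $v$ are literally the same under $\gamma'$ and under $\gamma$. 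By the monotonicity remark $\mu_v(\gamma')=\mu_v(\gamma)>\mu_v(\sigma)$, so $D$ performs a strong improvement. If $D\subsetneq C$ this contradicts minimality; if instead $D=C$, so $K_C(\gamma)=\{b\}$, I let a single $v\in C$ switch to $b$ by itself, which can only lose the coalition neighbours as same-colored partners, so $\mu_v$ still strictly improves and the singleton contradicts minimality (here $|C|\ge 2$ is used).

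Symmetrically I rule out \emph{lost} colors, i.e. $K_C(\sigma)\subseteq K_C(\gamma)$. Suppose $a\in K_C(\sigma)\setminus K_C(\gamma)$ and put $C'':=C\setminus C_a(\sigma)$. Consider the coloring $\gamma''$ in which the members of $C''$ adopt their colors from $\gamma$ while $C_a(\sigma)$ (and every non-member) stays at $\sigma$. Because $a\notin K_C(\gamma)$ we have $\gamma_v\neq a$ for all $v\in C''$, so the nodes of $C_a(\sigma)$, now colored $a$ in $\gamma''$, fail to match the color $\gamma_v$ of any $v\in C''$, while all other neighbours of $v$ keep their $\gamma$-color. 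Hence the $\gamma_v$-colored neighbours of $v$ under $\gamma''$ are a subset of those under $\gamma$, and the monotonicity remark gives $\mu_v(\gamma'')\ge\mu_v(\gamma)>\mu_v(\sigma)$. Thus the nonempty proper subset $C''$ performs a strong improvement, contradicting minimality, unless $C_a(\sigma)=C$ (i.e. $K_C(\sigma)=\{a\}$); in that degenerate case the same single-player reduction as before applies.

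Putting the two inclusions together yields $K_C(\sigma)=K_C(\gamma)$. I expect the main obstacle to be the neighbour-bookkeeping when the rest of $C$ reverts to $\sigma$: the whole argument works precisely because the ``gained''/``lost'' hypotheses guarantee that the reverted members either never carried the new color $b$ or can never reach the deviators' colors $\gamma_v\neq a$, so reverting them never creates a fresh same-colored neighbour for a deviator. Verifying the correct inclusion of neighbour sets --- and hence the direction of the payoff inequality --- in each of the two cases is the delicate point, with the monochromatic degenerate cases being the only place where the Nash-equilibrium hypothesis (no profitable singleton) is needed.
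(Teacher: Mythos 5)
The paper does not actually prove this proposition: it is quoted as Proposition~1(a) of \cite{mt} and used as a black box, so there is no in-paper argument to compare yours against. Judged on its own, your proof is correct and is the natural argument for this fact: for each hypothetical discrepancy between $K_C(\sigma)$ and $K_C(\gamma)$ you exhibit a nonempty proper sub-coalition ($C_b(\gamma)$ for a gained color $b$, $C\setminus C_a(\sigma)$ for a lost color $a$) whose members, deviating alone while the rest of $C$ reverts to $\sigma$, retain at least their $\gamma$-payoffs because the reverted vertices can never carry the deviators' target colors ($\sigma_w\neq b$ since $b\notin K_C(\sigma)$ in the first case, $\sigma_w=a\neq\gamma_v$ in the second); this contradicts minimality. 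The neighbour-set inclusions you check are exactly the right ones and they do hold, and you correctly verify that every member of the sub-coalition genuinely changes color, as the paper's definition of a deviation requires. One point worth making explicit: as transcribed here the statement carries no Nash-equilibrium hypothesis on $\sigma$, and without it the claim is false for a singleton coalition $C=\{v\}$, which is vacuously minimal yet has $K_C(\sigma)=\{\sigma_v\}\neq\{\gamma_v\}=K_C(\gamma)$. You rightly import the hypothesis that $\sigma$ is a NE --- which is how the proposition is actually invoked later, e.g.\ in Proposition~\ref{prop3_alt} --- and you use it only where it is genuinely needed, namely to secure $|C|\ge 2$ and dispose of the two monochromatic degenerate cases; your bookkeeping there is also correct.
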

% The second one follows:			
			
% 			\begin{lemma}[Prop. 3 from \cite{mt}]
% 				Let $\sigma$ be a NE for $\mathcal{G}$ and let $C\subseteq V$
% 				be a minimal subset w.r.t strongly deviating from $\sigma$
% 				(to another coloring $\gamma$); then, if $|K_C(\sigma)|\in
% 				\{2,\,|C|-1,\,|C|\}$, $\Delta S(\sigma,\gamma)>0$.
% 			\end{lemma}

			\begin{theorem}[Thm. 4 from \cite{mt}]
			\label{main_thm_fiorda}
				Any optimal coloring for a max $k$-cut game on
				an unweighted, undirected graph is a 5-Strong equilibrium.
			\end{theorem}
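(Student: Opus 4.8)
The plan is to argue by contradiction against optimality. Suppose $\sigma$ is optimal but not a 5-SE, so some coalition $C$ with $|C|\le 5$ strongly deviates to a coloring $\gamma$. Since an optimal coloring is already a Nash equilibrium \cite{mt}, no singleton can improve, so I may assume $2\le|C|\le 5$; and among all profitable coalitions I would choose one that is \emph{minimal} with respect to performing a strong deviation. Minimality lets me invoke Proposition~\ref{prop1fiorda}, giving $K_C(\sigma)=K_C(\gamma)$, i.e. the palette of colors occurring on $C$ is preserved by the deviation; it will also be used to constrain the internal geometry of $C$, since in a minimal coalition every node must move \emph{onto} the color of one of its own neighbors (a node whose new color is adjacent to nothing inside $C$ could be dropped without spoiling the others' gains).

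The first quantitative step is an accounting identity that isolates the edges the deviation actually touches. Let $\Delta S_{\mathrm{in}}$ and $\Delta S_{\mathrm{out}}$ denote the change in the number of cut edges having both, respectively exactly one, endpoints in $C$; edges with no endpoint in $C$ are untouched because nodes outside $C$ keep their colors. Summing the per-player gains over $C$ using the adjacency-matrix payoff \eqref{eqn:payoff_adj} and the symmetry relation \eqref{symmetry}, and noting that internal edges are counted at both endpoints while external edges are counted once, gives
\begin{equation*}
\mu_C(\gamma)-\mu_C(\sigma)=2\,\Delta S_{\mathrm{in}}+\Delta S_{\mathrm{out}}.
\end{equation*}
The strong-deviation hypothesis yields $\mu_C(\gamma)-\mu_C(\sigma)>0$, while optimality of $\sigma$ yields $\Delta S=\Delta S_{\mathrm{in}}+\Delta S_{\mathrm{out}}\le 0$. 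Combining, $\Delta S_{\mathrm{in}}=(2\Delta S_{\mathrm{in}}+\Delta S_{\mathrm{out}})-(\Delta S_{\mathrm{in}}+\Delta S_{\mathrm{out}})>0$, so $\Delta S_{\mathrm{in}}\ge 1$ and therefore $\Delta S_{\mathrm{out}}\le -1$. In words: starting from an optimum, every profitable strong deviation must strictly increase the internal cut of $C$ and strictly decrease the cut between $C$ and the rest of the graph.

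The heart of the argument is to show that these two requirements cannot be met simultaneously once $|C|\le 5$. Here I would combine three ingredients: the color preservation $K_C(\sigma)=K_C(\gamma)$; the pigeon-hole principle applied to the at most five vertices of $C$ distributed over the preserved colors of $K_C$, which severely limits the admissible color partitions and hence the number $P_C(\sigma,\gamma)$ of internal edges that can enter the cut, as computed by \eqref{eqn:PC}; and the demand that \emph{every} member of $C$ strictly gain, reinforced by minimality. Running through the finitely many color partitions of a set of at most five vertices, I would show that realizing $\Delta S_{\mathrm{in}}\ge 1$ under a palette-preserving relabeling forces so many external cut edges to be lost that some node of $C$ ends up with a net loss, or else that a proper sub-coalition already improves — each alternative contradicting a standing hypothesis.

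I expect this last case analysis to be the genuine obstacle. The global bookkeeping producing $\Delta S_{\mathrm{in}}\ge 1$ and $\Delta S_{\mathrm{out}}\le -1$ is clean and independent of $|C|$, but turning it into a contradiction requires controlling exactly how a color-preserving relabeling of a small vertex set can enlarge its internal cut, and this control is delicate and genuinely sensitive to the size of $C$: the bookkeeping closes for $|C|\le 5$ but leaves room for larger coalitions, which is precisely why pushing the threshold higher is nontrivial and is the task taken up in the present paper. Restricting to a minimal coalition — so that each node moves onto a neighboring color — is what renders this verification a bounded check rather than an open-ended search.
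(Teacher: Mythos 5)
First, a point of reference: the paper does not prove this statement itself --- it is quoted verbatim as Theorem~4 of \cite{mt} --- but Section~3 of the paper reconstructs and extends exactly the machinery behind it, so the comparison is against Lemma~\ref{prop_PI}, Proposition~\ref{prop1fiorda}, Proposition~\ref{prop4} and the case tables of Proposition~\ref{prop3_alt}/Theorem~\ref{teo4}. Your global bookkeeping is correct and is precisely the paper's Lemma~\ref{prop_PI} in different notation: since $\Delta S_{\mathrm{in}}=P_C(\sigma,\gamma)-P_C(\gamma,\sigma)$, your identity $\mu_C(\gamma)-\mu_C(\sigma)=2\Delta S_{\mathrm{in}}+\Delta S_{\mathrm{out}}$ rearranges to $\Delta S(\sigma,\gamma)=\mu_C(\gamma)-\mu_C(\sigma)-P_C(\sigma,\gamma)+P_C(\gamma,\sigma)\geq |C|-P_C(\sigma,\gamma)$, and your deduction that optimality forces $\Delta S_{\mathrm{in}}\geq 1$ and $\Delta S_{\mathrm{out}}\leq -1$ is sound. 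Likewise, reducing to a minimal coalition and invoking the palette preservation $K_C(\sigma)=K_C(\gamma)$ is exactly the route taken in \cite{mt} and here.

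The genuine gap is that the proof stops where the theorem actually begins. The bookkeeping holds for every $|C|$ and by itself yields no contradiction; the entire content of the $5$-SE bound is the finite verification that, for every color partition of at most five vertices consistent with Proposition~\ref{prop1fiorda} and with each node moving onto a neighbor's color, one cannot have $P_C(\sigma,\gamma)\geq |C|$ (equivalently, cannot have $\Delta S_{\mathrm{in}}\geq 1$ compatibly with every member strictly gaining). You describe this check --- ``running through the finitely many color partitions\ldots I would show that\ldots'' --- but do not perform it, and it is exactly this enumeration (the analogue of Figure~\ref{fig:prop3} and Table~\ref{tab:prop3}, carried out in \cite{mt} for $|C|\leq 5$) that constitutes the proof. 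A secondary flaw: your parenthetical justification that a node whose new color matches no neighbor in $C$ ``could be dropped without spoiling the others' gains'' is not valid, since other members may rely on that node \emph{vacating} its old color rather than on its destination; the correct argument (Proposition~\ref{prop4} here) instead uses the Nash property of $\sigma$ to show that such a node would either already be at maximum payoff or would constitute a unilateral profitable deviation. The conclusion you need is true, but not for the reason you give.
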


% \begin{prop}[Prop. 7 from \cite{mt}]
% 				The cut value is not a strong potential function for the max k-cut
% 				game, even if only minimal subsets are allowed to deviate.
% 			\end{prop}

We conclude this section by defining an applicative example that we will cover in our article.
Let us consider the following situation.

\begin{example}\label{example1} An $R\&D$ department of a company must be divided into three teams, each of which will work on a product: however, each of the 12  employees of that department wishes not to work together with some of his colleagues. Figure \ref{fig:fig1}.a reports a graphical representation of the example, where nodes correspond to employees, and edges denote the willingness of connected nodes not to belong to the same team.
Figure \ref{fig:fig1}.b shows  a generic assignment $\alpha$ of employees to the three projects, represented by different colors.
Profit means having the greatest number of people you do not want to work with, on projects other than your own.
% One way in which members of the $R\&D$ department can divide between these three projects by minimizing the number of conflicts is shown on the side (the blue arcs represent the conflicts). 
Some employees assigned to some team color are assumed to be unhappy (for example nodes $v_5$ and $v_{11}$), since they must work with undesired colleagues. Is it possible to find an optimal allocation, for example the one reported in Figure \ref{fig:fig1}.c, which maximizes the profit by minimizing the number of unhappy people?
%them to others projects?

% In this situation, the nodes of the graph are the members of the department. An edge between two nodes $A$ and $B$ means that the members $A$ and $B$ do not want to work together. 

This game can also be seen as an adaptation to game theory of the max $k$-cut problem, which consists in partitioning a graph into disjoint sets of nodes in order to minimize the total number of edges between nodes of the same part. Indeed, by optimal coloring, in this case, we intend to minimize the number of pairs of employees who do not want to work together and who are assigned to the same project. To deviate means to change projects.
%In figure 1.b we report a generic coloring, while figure 1.c reports an optimal one.
Strong equilibrium designates the configuration such that if any number of players tried to change the project all together, they would find themselves working on a project with a number of people with whom they would not want to work greater or equal than the previous one.
In this context, then if $\delta(v)$ is the maximum number of people to which individual $v$ is connected, we indicate with $\delta(v,\sigma,a) \leq \delta(v)$ the number of those she do not want to work with who are doing the project $a$.
\end{example}	

The main contribution of this paper is to analyze whether, given a group of unsatisfied people, it is possible to move from an optimal configuration like the $\gamma$ in Figure \ref{fig:fig1}.c to another, such that the payoff of each member of the coalition is increased. % DARIO QUI

\begin{figure}
    \centering
    \includegraphics[width=\textwidth]{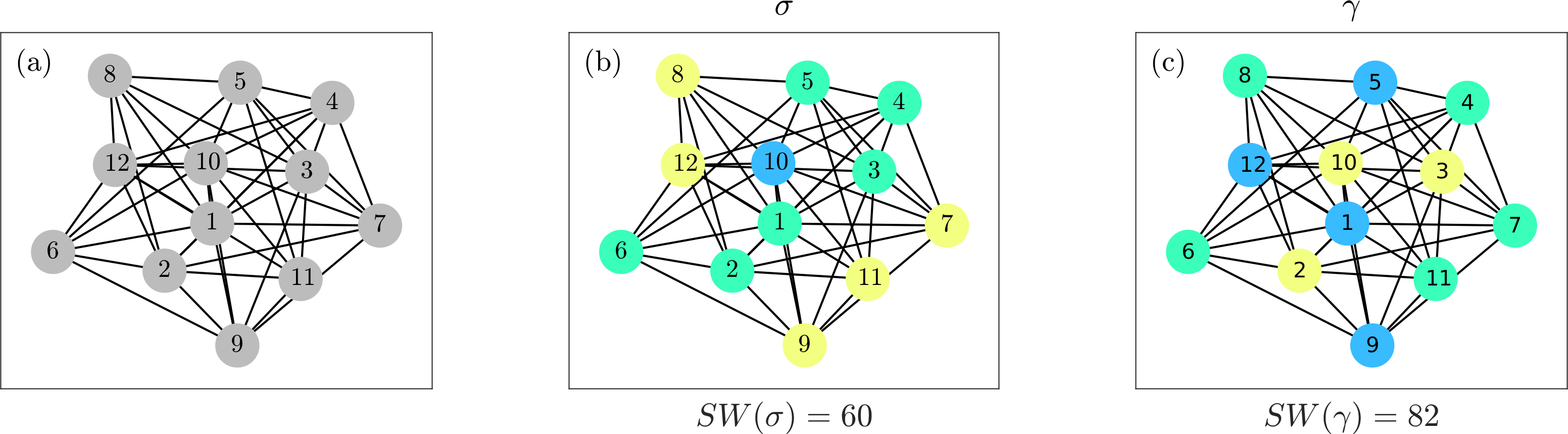}
    \caption{Subplot (a): a network with $n=12$ nodes representing the R\&D department described in Example \ref{example1}.
    Subplot (b): a generic coloring $\sigma$ with $k=3$ colors. 
    Subplot (c): an optimal coloring $\gamma$ with $k=3$ colors.}
    \label{fig:fig1}
\end{figure}

%FINE PRELIMINARIES		
%INIZIO RESULTS
			
\section{The existence of a 7-SE in unweighted graphs}
In \cite{mt} it is shown the existence of a 5-SE in unweighted graphs. We extend this result, by proving the existence of a 7-SE.
We start by showing two propositions which are the core of our main result. 

%%%%%
\begin{lemma}
\label{prop_PI}
Given a graph $G$, let $\sigma$ be a NE for $\mathcal{G}$ and let $C \subseteq V$ strongly deviate from $\sigma$ to $\gamma \in K^n$. Then $\Delta S(\sigma, \gamma) \geq |C| - P_{C}(\sigma, \gamma).$
\begin{proof}
We start by noticing that for both $\gamma$ and $\sigma$ the payoff $S$ can be calculated by considering separately the nodes belonging to $C$ and to $V\setminus C $:

$$
\begin{array}{rcl}
S(\sigma) & = & \displaystyle\frac{1}{2} \left[\sum_{v \in C}\left(\sum_{\substack{w \in C \\ \sigma_w \neq \sigma_v}}a_{v,w} + \sum_{\substack{w \in V \setminus C \\ \sigma_w \neq \sigma_v}}a_{v,w} \right) + \sum_{v \in V \setminus C}\left(\sum_{\substack{w \in C \\ \sigma_w \neq \sigma_v}}a_{v,w} + \sum_{\substack{w \in V \setminus C \\ \sigma_w \neq \sigma_v}}a_{v,w} \right)\right],\\
\end{array}
$$

and

$$
\begin{array}{rcl}
S(\gamma) & = & \displaystyle\frac{1}{2} \left[\sum_{v \in C}\left(\sum_{\substack{w \in C \\ \gamma_w \neq \gamma_v}}a_{v,w} + \sum_{\substack{w \in V \setminus C \\ \gamma_w \neq \gamma_v}}a_{v,w} \right) + \sum_{v \in V \setminus C}\left(\sum_{\substack{w \in C \\ \gamma_w \neq \gamma_v}}a_{v,w} + \sum_{\substack{w \in V \setminus C \\ \gamma_w \neq \gamma_v}}a_{v,w} \right)\right],\\
\end{array}
$$

Since the color of nodes in $V \setminus C$ does not change, we have that:

$$\sum_{v \in V \setminus C}\sum_{\substack{w \in V \setminus C \\ \sigma_w \neq \sigma_v}}a_{v,w} = \sum_{v \in V \setminus C}\sum_{\substack{w \in V \setminus C \\ \gamma_w \neq \gamma_v}}a_{v,w}.$$

Additionally, using equation \eqref{symmetry} in the Preliminaries section, we have that:

$$\begin{array}{rcl}
\Delta S(\sigma, \gamma) & = & S(\gamma) - S(\sigma)\\
& = & \displaystyle\frac{1}{2}\sum_{v \in C} \left[\left(\sum_{\substack{w \in C \\ \gamma_w \neq \gamma_v}}a_{v,w} + 2\sum_{\substack{w \in V \setminus C \\ \gamma_w \neq \gamma_v}}a_{v,w} \right) - \left(\sum_{\substack{w \in C \\ \sigma_w \neq \sigma_v}}a_{v,w} + 2\sum_{\substack{w \in V \setminus C \\ \sigma_w \neq \sigma_v}}a_{v,w} \right)\right] \\
& = & \displaystyle\frac{1}{2}\sum_{v \in C} \left[\left(2\sum_{\substack{w \in C \\ \gamma_w \neq \gamma_v}}a_{v,w} + 2\sum_{\substack{w \in V \setminus C \\ \gamma_w \neq \gamma_v}}a_{v,w} \right) - \left(2\sum_{\substack{w \in C \\ \sigma_w \neq \sigma_v}}a_{v,w} + 2\sum_{\substack{w \in V \setminus C \\ \sigma_w \neq \sigma_v}}a_{v,w} \right)\right] \\
& & - \displaystyle\frac{1}{2}\sum_{v \in C} \left(\sum_{\substack{w \in C \\ \gamma_w \neq \gamma_v}}a_{v,w}  - \sum_{\substack{w \in C \\ \sigma_w \neq \sigma_v}}a_{v,w} \right).
\end{array}
$$

Moreover, using equation \eqref{eqn:payoff_adj}, we have that:
$$
\begin{array}{rcl}
\Delta S (\sigma, \gamma)& = & \displaystyle\sum_{v \in C} (\mu_v(\gamma) -\mu_v(\sigma))- \displaystyle\frac{1}{2}\sum_{v \in C} \left(\sum_{\substack{w \in C \\ \gamma_w \neq \gamma_v}}a_{v,w}  - \sum_{\substack{w \in C \\ \sigma_w \neq \sigma_v}}a_{v,w} \right) \\
& = & \displaystyle\sum_{v \in C} (\mu_v(\gamma) -\mu_v(\sigma))- \displaystyle\frac{1}{2}\sum_{v \in C} \left(\sum_{\substack{w \in C \\ \gamma_w \neq \gamma_v \\ \sigma_w = \sigma_v}}a_{v,w} + \sum_{\substack{w \in C \\ \gamma_w \neq \gamma_v \\ \sigma_w \neq \sigma_v}}a_{v,w}  - \sum_{\substack{w \in C \\ \sigma_w \neq \sigma_v \\ \gamma_w = \gamma_v}}a_{v,w} - \sum_{\substack{w \in C \\ \sigma_w \neq \sigma_v \\ \gamma_w \neq \gamma_v}}a_{v,w} \right) \\
& = & \displaystyle\sum_{v \in C} (\mu_v(\gamma) -\mu_v(\sigma))- \displaystyle\frac{1}{2}\sum_{v \in C} \left(\sum_{\substack{w \in C \\ \gamma_w \neq \gamma_v \\ \sigma_w = \sigma_v}}a_{v,w}  - \sum_{\substack{w \in C \\ \sigma_w \neq \sigma_v \\ \gamma_w = \gamma_v}}a_{v,w} \right).
\end{array}$$

Using equation \eqref{eqn:PC}, we finally get that:

$$\Delta S( \sigma, \gamma) =  \displaystyle\sum_{v \in C} (\mu_v(\gamma) -\mu_v(\sigma))-P_{C}(\sigma, \gamma) + P_{C}(\gamma, \sigma).
$$

On the other hand, since $C$ is a strong deviation, then:

$$\mu_v(\gamma) \geq \mu_v(\sigma) + 1 ~\forall v \in C.$$

This yields to:

$$\displaystyle\sum_{v \in C} (\mu_v(\gamma) -\mu_v(\sigma)) \geq \displaystyle\sum_{v \in C} 1 = |C|.$$

Hence:
$$\begin{array}{rcl}
\Delta S(\sigma, \gamma) & \geq & |C| -P_{C}(\sigma, \gamma) + P_{C}(\gamma, \sigma) \\ 
& \geq & |C| -P_{C}(\sigma, \gamma).
\end{array}
$$
The last inequality holds since $P_{C}(\gamma, \sigma) \geq 0$.

\end{proof}
\end{lemma}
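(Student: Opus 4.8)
The plan is to express $\Delta S(\sigma,\gamma)$ entirely in terms of quantities localized on the coalition $C$, and then invoke the strong-deviation hypothesis. First I would write $S(\sigma)=\tfrac12\sum_{v\in V}\mu_v(\sigma)$ and split the sum according to where the endpoints of an edge lie: both inside $C$, both inside $V\setminus C$, or straddling the two sets. The crucial observation is that edges with both endpoints in $V\setminus C$ never change their cut status, because the coalition only repaints vertices of $C$; hence these terms cancel in the difference $S(\gamma)-S(\sigma)$, and only edges incident to at least one vertex of $C$ survive.

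Next I would relate the surviving terms to the coalition's aggregate payoff change $\mu_C(\gamma)-\mu_C(\sigma)=\sum_{v\in C}\big(\mu_v(\gamma)-\mu_v(\sigma)\big)$. Here the one delicate point is a double-counting correction: an edge joining $C$ to $V\setminus C$ is counted once in this sum (only its $C$-endpoint contributes), whereas an edge internal to $C$ is counted twice, once for each endpoint. Since $\Delta S$ counts every edge exactly once, the mismatch is precisely the net number of internal edges of $G(C)$ whose cut status flips. By invoking the symmetry relation \eqref{symmetry} to handle the mixed sums and the definition \eqref{eqn:PC} of $P_C$, this correction should organize itself into the clean identity
\begin{equation*}
\Delta S(\sigma,\gamma)=\big(\mu_C(\gamma)-\mu_C(\sigma)\big)-P_C(\sigma,\gamma)+P_C(\gamma,\sigma),
\end{equation*}
where $P_C(\sigma,\gamma)$ counts the internal edges that enter the cut and $P_C(\gamma,\sigma)$ those that leave it.

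With the identity in hand, the conclusion is immediate. Because $C$ performs a strong deviation, every member strictly improves, so $\mu_v(\gamma)\ge\mu_v(\sigma)+1$ for each $v\in C$; summing over the coalition gives $\mu_C(\gamma)-\mu_C(\sigma)\ge|C|$. Dropping the nonnegative term $P_C(\gamma,\sigma)\ge0$ then yields $\Delta S(\sigma,\gamma)\ge|C|-P_C(\sigma,\gamma)$, as required. Note that I only use $\mu_v(\gamma)\ge\mu_v(\sigma)+1$, so the hypothesis that $\sigma$ is itself a NE is not needed for this particular bound, although it is the natural setting in which the statement will later be applied.

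The main obstacle I anticipate is not conceptual but bookkeeping: keeping the factor $\tfrac12$ and the two-sided counting of internal edges straight while passing from the payoff formulation \eqref{eqn:payoff_adj} to the cut formulation. The cleanest way to control this is to isolate, inside the difference of payoff sums, the internal-to-$C$ contribution $\tfrac12\sum_{v\in C}\big(\sum_{w\in C,\,\gamma_w\neq\gamma_v}a_{v,w}-\sum_{w\in C,\,\sigma_w\neq\sigma_v}a_{v,w}\big)$ and to split each inner sum according to whether the other endpoint also changes its relative color, so that the surviving pieces match the definition of $P_C$ exactly.
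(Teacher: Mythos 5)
Your proposal is correct and follows essentially the same route as the paper's proof: the same decomposition of $S$ over $C$ and $V\setminus C$, the same identity $\Delta S(\sigma,\gamma)=\big(\mu_C(\gamma)-\mu_C(\sigma)\big)-P_C(\sigma,\gamma)+P_C(\gamma,\sigma)$, and the same final estimate from the strong-deviation hypothesis. Your side remark that the NE assumption on $\sigma$ is not actually used is accurate --- the paper's argument does not invoke it either.
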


% There are two cases:
% \begin{itemize}
%     \item $v \in V \setminus C$: the payoff difference for $v$ depends only on the number of link connecting $v$ to members $w$ of $C$ such that $\sigma_v = \sigma_w$ and $\gamma_v \neq \gamma_w$. Thus:
%     $$\frac{1}{2}\sum_{v \in V \setminus C}(\mu_v(\gamma)-\mu_v(\sigma)) = P_C(\sigma, \gamma)$$ 
%     \item $v \in C$: the payoff difference for $v$ is equal to the the number of link connecting $v$ to members $w$ of $V \setminus C$ such that $\sigma_v = \sigma_w$ and $\gamma_v \neq \gamma_w$, plus the number 
% \end{itemize}

The previous Proposition will be used in the proof of the next theoretical results.

\begin{prop}
\label{prop3_alt}
Given a graph $G$, let $\sigma$ be a $NE$ for it and let $C\subseteq V$ be a minimal coalition that strongly deviates from a coloring $\sigma$ to a different one $\gamma$.
If $|K_C(\sigma)| \in \{|C|-3, |C|-2\}$, then $\Delta S(\sigma,\gamma)>0$. 
%If $|K_C(\sigma)|\geq|C|-3$, then $\Delta S(\sigma,\gamma)>0$.
\begin{proof}
For each feasible dimension of $|C|$,
we consider the maximum $P_C(\sigma, \gamma)$ in each configuration.

\begin{itemize}
    \item \textbf{Case a: $|K_C(\sigma)| = |C|-2$}\\
In this situation, for the Pigeon-hole principle, we have the following subcases:

\begin{itemize}
    \item[a.1)] $|C_a(\sigma)| = 3$, $|C_b(\sigma)| =1 ~\forall b \in K_C(\sigma) \setminus \{a\}$. Assuming at least two colors in $C$, we have that $|C| \geq 4$.
    %In this case,  in order to have at least $2$ colors, then $|C| \geq 4$. Furthermore, $U_C(\sigma) = 2$. For Proposition \ref{prop_PI}:
    %$$\Delta S \geq |C| - P_C(\sigma, \gamma) \geq 4 - 2 \geq 2.$$

    \item[a.2)] $|C_a(\sigma)| = |C_b(\sigma)| = 2$, $|C_c(\sigma)| =1 ~\forall c \in K_C(\sigma) \setminus \{a,b\}$. 
    The presence of singleton colors $c$ is not required.
    Also in this case, $|C| \geq 4$ guarantees the presence of at least two colors.
    %In this case,  in order to have at least $2$ colors, then $|C| \geq 4$. Moreover, $U_C(\sigma) = 1 + 1 = 2$.
%For Proposition \ref{prop_PI}:
    %$$\Delta S \geq |C| - P_C(\sigma, \gamma) \geq 4 - 2 \geq 2.$$    
\end{itemize}

\item \textbf{Case b: $|K_C(\sigma)| = |C|-3$}\\
In this situation, for the Pigeon-hole principle, we have the following subcases:

\begin{itemize}
    \item[b.1)] $|C_a(\sigma)| = 4$, $|C_b(\sigma)| =1 ~\forall b \in K_C(\sigma) \setminus \{a\}$. For this case, $|C| \geq 5$ in order to have at least two colors.
    %In this case,  in order to have at least $2$ colors, then $|C| \geq 5$.    
    %Additionally, $U_C(\sigma) = 4$. 
    %In this particular case, notice that    thanks to Proposition \ref{prop_PI}:
    %$$\Delta S \geq |C| - P_C(\sigma, \gamma) \geq 5 - 4 \geq 1.$$
    
    \item[b.2)] $|C_a(\sigma)| = 3$, $|C_b(\sigma)| = 2$, $|C_c(\sigma)| =1 ~\forall c \in K_C(\sigma) \setminus \{a,b\}$.
    Here, the presence of singleton colors is not required, and $|C| \geq 5$ in order to have at least two colors.
    %In this case,  in order to have at least $2$ colors, then $|C| \geq 5$. Moreover, $U_C(\sigma) = 2 + 1 = 3$.
%For Proposition \ref{prop_PI}:
 %   $$\Delta S \geq |C| - P_C(\sigma, \gamma) \geq 5 - 3 \geq 2.$$ 
    \item[b.3)] $|C_a(\sigma)| = |C_b(\sigma)| = |C_c(\sigma)| = 2, 
    |C_d(\sigma)| = 1  ~\forall d \in K_C(\sigma) \setminus \{a, b, c\}$.
    The singleton colors are not required.
    In this case, $|C| \geq 6$ and the number of colors it at least $3$.
    
    %In this case, we require $|C| \geq 6$ in order to have at least $3$ colors. Additionally, $U_C(\sigma) = 1+1+1 = 3$.
%For Proposition \ref{prop_PI}:
 %   $$\Delta S \geq |C| - P_C(\sigma, \gamma) \geq 6 - 3 \geq 3.$$     
\end{itemize}
\end{itemize}

For each case and for each feasible value of $|C|$, we evaluate the maximum value of $P_C(\sigma, \gamma)$.
This evaluation is graphically depicted in Figure \ref{fig:prop3}, where $\gamma$ colorings maximize the number of edges connecting nodes with different colors, which were sharing the same colors in $\sigma$ colorings.
Notice that, since $C$ is minimal, all colors in $\gamma$ are the same as in $\sigma$. This is coherent with the Proposition \ref{prop1fiorda}.
%, the colors used by members of $C$ in $\gamma$ are the same of those used in $\sigma$.
The maximum value of $P_C(\sigma, \gamma)$, as well as the minimum value of $\Delta S (\sigma, \gamma)$ are calculated in Table \ref{tab:prop3} according to Lemma \ref{prop_PI}, i.e. $\Delta S(\sigma, \gamma) \geq |C| - P_C(\sigma, \gamma)$. \\

As shown in Table \ref{tab:prop3}, $\Delta S(\sigma, \gamma) > 0$ in all cases.

\begin{figure}[ht]
    \centering
    \includegraphics[width=\textwidth]{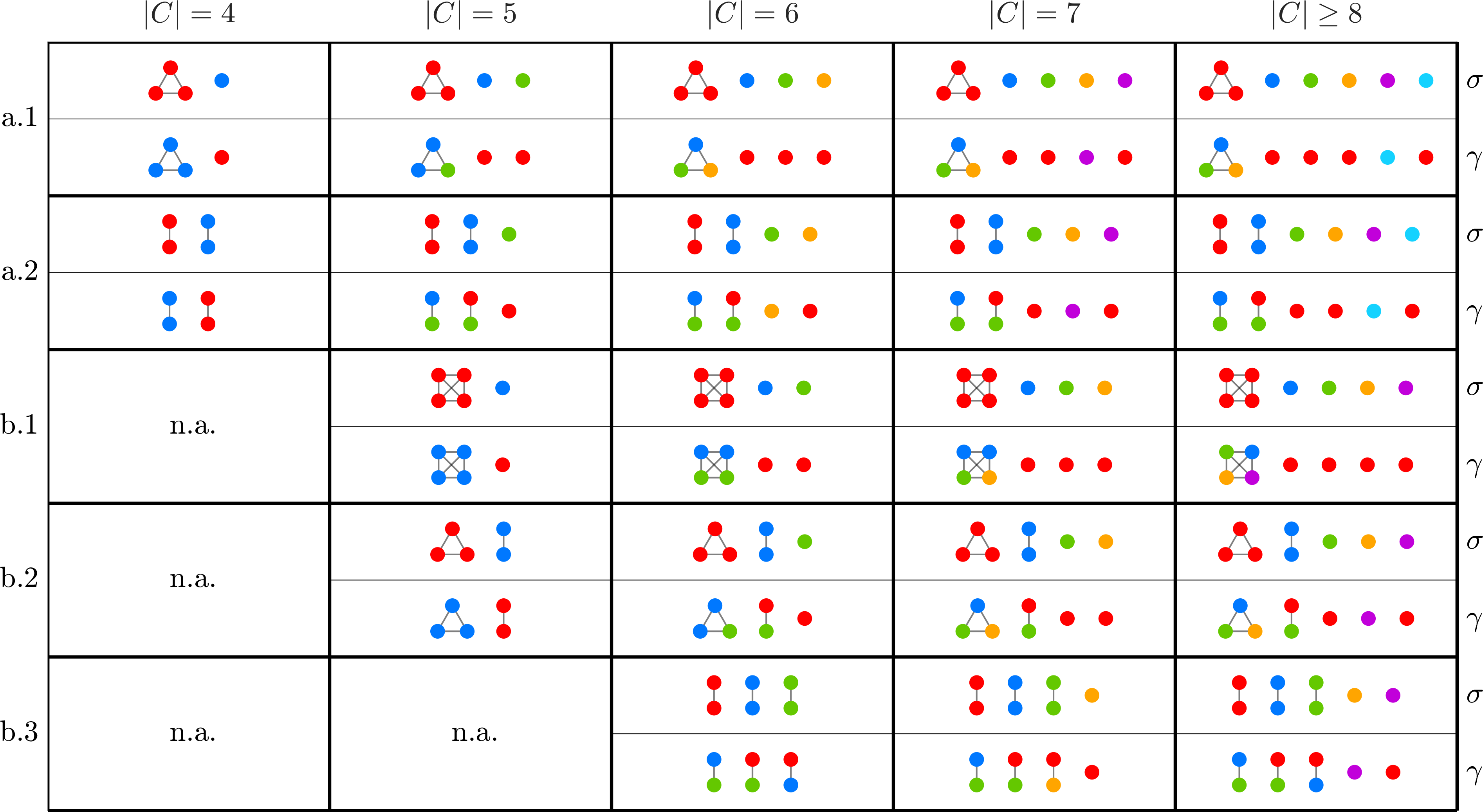}
    \caption{Graphical representation of the configurations analysed in Proposition \ref{prop3_alt}. N.a. entries indicate non feasible cases. }
    \label{fig:prop3}
\end{figure}

 \begin{table}[ht]
    \centering
\begin{tabular}{|c|c|c|c|c|c|c|}
\hline
\multicolumn{2}{|c|}{~} & $|C| = 4$ & $|C| = 5$ & $|C| = 6$ & $|C| = 7$ & $|C| \geq 8$ \\
 \hline %
  \hline %
\multirow{2}{*}{a.1} & $\max  P_C(\sigma,\gamma)$ &  $0$ & $2$ & $3$ & $3$ & $3$\\
\cline{2-7}
 & $ \Delta S(\sigma, \gamma)$ &  $4$ & $3$ & $3$ & $4$ & $\geq 5$\\
\hline\hline
\multirow{2}{*}{a.2} & $\max P_C(\sigma,\gamma)$ &  $0$ & $2$ & $2$ & $2$ & $2$\\
\cline{2-7}
 & $ \Delta S(\sigma, \gamma)$ &  $4$ & $3$ & $4$ & $5$ & $\geq 6$\\
\hline\hline
\multirow{2}{*}{b.1} & $\max P_C(\sigma,\gamma)$ &  n.a. & $0$ & $4$ & $5$ & $6$\\
\cline{2-7}
 & $ \Delta S(\sigma, \gamma)$ &  n.a. & $5$ & $2$ & $2$ & $\geq 2$\\
\hline\hline
\multirow{2}{*}{b.2} & $\max P_C(\sigma,\gamma)$ &  n.a. & $0$ & $3$ & $4$ & $4$\\
\cline{2-7}
 & $ \Delta S(\sigma, \gamma)$ &  n.a. & $5$ & $3$ & $3$ & $\geq 4$\\
\hline\hline
 \multirow{2}{*}{b.3} & $\max P_C(\sigma,\gamma)$ &  n.a. & n.a. & $3$ & $3$ & $3$\\
\cline{2-7}
 & $ \Delta S(\sigma, \gamma)$ &  n.a. & n.a. & $3$ & $4$ & $\geq 5$\\
\hline
\end{tabular}
    \caption{Evaluation of $\Delta S(\sigma, \gamma)$ for the the different configurations  analysed in Proposition \ref{prop3_alt}, depending on $|C|$ and $P_C(\sigma,\gamma)$. N.a. entries indicate non feasible cases.}
     \label{tab:prop3}
 \end{table}

\end{proof}

\end{prop}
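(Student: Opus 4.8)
The plan is to reduce everything to the inequality of Lemma \ref{prop_PI}, namely $\Delta S(\sigma,\gamma) \geq |C| - P_C(\sigma,\gamma)$, and then to bound $P_C(\sigma,\gamma)$ from above in every admissible configuration. Since $P_C(\sigma,\gamma)$ counts only edges inside $C$ that are monochromatic under $\sigma$ and bichromatic under $\gamma$, each such edge must join two vertices of a common color class $C_a(\sigma)$. Hence the crude bound $P_C(\sigma,\gamma) \leq \sum_{a\in K_C(\sigma)} \binom{|C_a(\sigma)|}{2}$ holds, with equality only if each color class induces a clique and every intra-class edge is split by $\gamma$. This cap depends solely on how the $|C|$ players are distributed among the $|K_C(\sigma)|$ colors, so the first task is to enumerate those distributions.

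The hypothesis $|K_C(\sigma)|\in\{|C|-2,|C|-3\}$ is equivalent to saying the surplus $\sum_{a}\big(|C_a(\sigma)|-1\big)=|C|-|K_C(\sigma)|$ equals $2$ or $3$. By the pigeon-hole principle the only ways to partition this surplus into class-size excesses are the five listed cases: surplus $2$ yields either one class of size $3$ (case a.1) or two classes of size $2$ (case a.2); surplus $3$ yields one class of size $4$ (b.1), a size-$3$ together with a size-$2$ class (b.2), or three size-$2$ classes (b.3). All remaining colors are singletons and contribute nothing to $P_C$, so only the surplus classes matter.

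The real work is then to compute, for each case and each feasible $|C|$, the \emph{exact} maximum of $P_C(\sigma,\gamma)$, which is generally strictly below the clique cap. Two constraints depress it: by Proposition \ref{prop1fiorda} the color set is preserved, $K_C(\gamma)=K_C(\sigma)$, and minimality forces $\gamma_v\neq\sigma_v$ for every $v\in C$. Thus a vertex of color $a$ may only be recolored into one of the other $|K_C(\sigma)|-1$ colors, so when a class of size $s$ must be scattered over too few admissible colors the pigeon-hole principle forces some of its members back into a common color, re-merging intra-class pairs. Quantitatively, splitting a size-$s$ class into $t$ allowed colors produces at most $\binom{s}{2}-\min\sum_i\binom{s_i}{2}$ bichromatic pairs, maximized by a balanced $t$-partition; carrying this count through each configuration yields the ``$\max P_C$'' row of Table \ref{tab:prop3}, the maximizing recolorings being those drawn in Figure \ref{fig:prop3}. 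The extreme instance is $|C|=4$ in cases a.1 and a.2, where only two colors are available, every class collapses, and $\max P_C=0$.

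Finally I would substitute these maxima into Lemma \ref{prop_PI} to read off the ``$\Delta S$'' row, $\Delta S(\sigma,\gamma)\geq |C|-\max P_C(\sigma,\gamma)$, and check that it is strictly positive in every admissible entry; the tightest bound occurs in case b.1 for $|C|\in\{6,7\}$, where $\Delta S\geq 2>0$ still holds. The main obstacle is precisely the third step: pinning down the attainable maximum of $P_C$ rather than the loose clique bound, since this requires simultaneously respecting the preserved palette, the mandatory color change of every coalition member, and the pigeon-hole obstruction to scattering a large class over a small admissible color set. Once those maxima are secured, the strict positivity of $\Delta S(\sigma,\gamma)$ is immediate from the lemma.
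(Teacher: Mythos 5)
Your proposal follows essentially the same route as the paper: reduce to $\Delta S(\sigma,\gamma)\geq |C|-P_C(\sigma,\gamma)$ via Lemma \ref{prop_PI}, enumerate by pigeon-hole the same five class-size distributions a.1--b.3, bound $\max P_C(\sigma,\gamma)$ case by case using the preserved palette $K_C(\gamma)=K_C(\sigma)$ and the forced recoloring of every coalition member, and read off strict positivity for each feasible $|C|$. If anything, your surplus decomposition and balanced-partition bound for scattering a size-$s$ class over $t$ admissible colors make explicit the computation that the paper only records in its figure and table.
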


The following proposition shows that in the coalition there is a rearrangement of colors due to the strong deviation such that each vertex assumes the color of another vertex within the coalition itself.

\begin{prop}\label{prop4}
	Given a graph $G=(V,E)$, let $\sigma$ be a NE for $\mathcal{G}$ and let $C\subseteq V$
			be a minimal subset w.r.t. strongly deviating from a coloring $\sigma$
			to $\gamma$, with $\sigma \neq \gamma$. Then, for each vertex $v \in C$,
			there exists a vertex $w \in C$ such that $v\neq w$, $\sigma_v \neq \sigma_w$, $\{v,w\}\in E$
			and $\gamma_v=\sigma_w$.
		\end{prop}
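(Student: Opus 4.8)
The plan is to fix an arbitrary vertex $v\in C$, set $a:=\gamma_v$, and produce a neighbour $w\in C$ with $\sigma_w=a$. This already suffices: because $v\in C$ deviates we have $\gamma_v\neq\sigma_v$, so such a $w$ would satisfy $\sigma_w=a=\gamma_v\neq\sigma_v$, giving simultaneously $\sigma_v\neq\sigma_w$ and $\gamma_v=\sigma_w$; together with $w\in C$, $w\neq v$ and $\{v,w\}\in E$ this is exactly the claim. Thus the whole statement reduces to finding, for each $v$, one coalition neighbour whose old colour is $v$'s new colour.

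The crucial idea is to compare the genuine post-deviation payoff $\mu_v(\gamma)$ with the payoff $\mu_v(\sigma_{-v},a)$ that $v$ would obtain by switching to colour $a$ \emph{alone} while everybody else keeps its $\sigma$-colour. On one hand, since $C$ performs a strong deviation, $\mu_v(\gamma)\ge \mu_v(\sigma)+1$. On the other hand, as $\sigma$ is a NE, the unilateral move of $v$ to $a$ cannot be profitable, so $\mu_v(\sigma_{-v},a)\le \mu_v(\sigma)$. Chaining these gives the strict bound $\mu_v(\gamma) > \mu_v(\sigma_{-v},a)$.

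Next I would evaluate the difference $\mu_v(\gamma)-\mu_v(\sigma_{-v},a)$ through the adjacency formula \eqref{eqn:payoff_adj}. In both colourings $v$ carries colour $a$, and the two colourings coincide on every vertex of $V\setminus C$; hence all edges from $v$ to $V\setminus C$ contribute identically and cancel, leaving only the coalition neighbours of $v$:
\[
\mu_v(\gamma)-\mu_v(\sigma_{-v},a)=\sum_{\substack{w\in C\setminus\{v\}\\ \{v,w\}\in E}}\big(\mathbf{1}[\gamma_w\neq a]-\mathbf{1}[\sigma_w\neq a]\big).
\]
Each summand lies in $\{-1,0,1\}$, and the whole sum is strictly positive by the previous step, so at least one summand must equal $+1$. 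A summand equals $+1$ precisely when $\gamma_w\neq a$ and $\sigma_w=a$, which yields a vertex $w\in C\setminus\{v\}$ adjacent to $v$ with $\sigma_w=a=\gamma_v$, exactly the neighbour sought in the reduction.

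The only genuinely delicate point is the \emph{choice of comparison coloring}: one must compare $\mu_v(\gamma)$ against the unilateral switch of $v$ to $\gamma_v$ (not against reverting $v$ to $\sigma_v$), for it is this configuration that the NE hypothesis on $\sigma$ controls; once this comparison is set up, the cancellation of the $V\setminus C$ edges and the sign analysis are routine. I would finally verify the four required properties for this $w$ as in the first paragraph. Note that the argument invokes only that $\sigma$ is a NE and that $C$ strongly deviates (so that every member strictly improves and changes colour); minimality of $C$ is not actually needed for this particular conclusion.
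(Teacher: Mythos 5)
Your proof is correct, and its mathematical core is the same as the paper's: the chain $\mu_v(\gamma)\ge\mu_v(\sigma)+1$ (strong deviation) together with $\mu_v(\sigma_{-v},\gamma_v)\le\mu_v(\sigma)$ (Nash equilibrium) forces the number of neighbours of $v$ coloured $\gamma_v$ to drop strictly between $\sigma$ and $\gamma$, and only coalition members can account for that drop. The packaging differs slightly: the paper splits the argument into two contradiction steps (first showing $\delta(v,\sigma,\gamma_v)>0$, then showing the relevant neighbour lies in $C$) and invokes minimality of $C$ in the first step, whereas your single adjacency-sum computation subsumes both steps at once; your closing observation that minimality is in fact not needed (only the NE hypothesis and the strict improvement) is correct and is, if anything, a small sharpening of the statement.
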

	\begin{proof}
The proof will be split into two halves: 
\begin{enumerate}
\item First, we show that each vertex in $C$ deviates towards the color that one of its neighbors has in $\sigma$. 
\item Then, we prove that such neighbor must be in $C$, meaning that each vertex deviates to a color used in $C$.
\end{enumerate}
Concerning part 1, we show that for each vertex $u \in C$, it holds that $\delta(u,\sigma, \gamma_u)>0$. Indeed, let assume by contradiction that there exists a vertex $t\in C$ such that $\delta(t,\sigma,\gamma_{t})=0$. 
%Therefore, either $\delta(t,\sigma,\sigma_{t})=0$ or
%$\delta(t,\sigma,\sigma_{t})>0$.
We distinguish the two sub-cases:
		
		\begin{enumerate}
			\item If $\delta(t,\sigma,\sigma_{t})=0$,
		then $t$ does not need to improve its payoff
			as it is already earning its maximum payoff; but this contradicts
			the hypothesis that $C$ is a strong deviation.
			
			\item If $\delta(t,\sigma,\sigma_{t})>0$, since $\delta(t,\sigma,\gamma_{t})=0$, we get that $\{t\}$ is a minimal strong deviation from $\sigma$. However, this contradicts the hypothesis that $\sigma$ is a NE and that $C$
			is minimal w.r.t. strongly deviating from $\sigma$.
		\end{enumerate}
Therefore, for each vertex $u\in C,\, \delta(u,\sigma, \gamma_u)>0$. Notice that this fact holds also if $u$ deviates towards the color of another vertex in $C$ that is not one its neighbors.

Concerning part 2, let $u\in C$ be such that $\gamma_{u}=\sigma_t$,
		for some $t\in V\backslash C$, and $\{s,t\}\in E$ with $\sigma_s\neq\sigma_t$, for each node $s\in C$.

Since $\sigma$ is a NE, it holds that $\delta(u,\sigma,\sigma_t)\ge
\delta(u,\sigma,\sigma_{u})$.

Moreover, we are assuming that $C$ strongly deviates, we get that $\mu_{u}(\gamma)>\mu_{u}(\sigma)$,
		which means that $\delta(u,\gamma, \sigma_t)<
		\delta(u, \sigma, \sigma_{u})$.
Hence, $\delta(u,\sigma,\sigma_t) > \delta(u,\gamma,\sigma_t).$

Anyway, it might happen that some neighbors of $u$ in $\gamma$ deviate towards $\sigma_t$, meaning that $\delta(u, \gamma, \sigma_t)\geq \delta(u, \sigma, \sigma_t)$, thus obtaining a contradiction.

Therefore, for each vertex $v\in C$,
		there must be a vertex $w\in C$ such that $w\neq v$, $\{v,w\}\in E$ and
		$\gamma_v=\sigma_w$. This means that a vertex $u \in C$ must deviate to the color of one of its neighbors in $C$.
	\end{proof}
	
In our example, Proposition \ref{prop4} asserts that one member of the coalition is so disliked by another that he leaves the project to him.
	
\begin{lemma}\label{lemma:ne_singleton}
Let $G=(V,E)$ be a graph,  $\sigma$ a NE for $G=(V,E)$ and $C\subseteq V$ a strongly deviating coalition from a coloring $\sigma$
		to $\gamma$, with $\sigma \neq \gamma$. Then, $|C|\geq  2$.
\end{lemma}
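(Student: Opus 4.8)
The plan is to proceed by a short case analysis on the size of $C$, ruling out the two forbidden cardinalities $|C|=0$ and $|C|=1$ directly from the definitions of deviation, strong improvement, and Nash equilibrium. The statement is a reformulation of the fact that a single player has no profitable unilateral deviation in a NE, so the argument is essentially a matter of unwinding definitions; no machinery beyond the Preliminaries is needed.

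First I would dispose of the empty case. By definition, $C$ deviates from $\sigma$ to $\gamma$ only if $\gamma_v\neq\sigma_v$ for every $v\in C$ and $\gamma_w=\sigma_w$ for every $w\notin C$. If $C=\emptyset$ there are no vertices whose color may differ, so $\gamma_w=\sigma_w$ for all $w\in V$, i.e.\ $\gamma=\sigma$, contradicting the hypothesis $\sigma\neq\gamma$. Hence $|C|\geq 1$.

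Next I would rule out the singleton case. Suppose $C=\{v\}$. Then $\gamma$ agrees with $\sigma$ on every vertex except $v$, so $\gamma$ is exactly the unilateral re-coloring $(\sigma_{-v},\gamma_v)$ with $\gamma_v\neq\sigma_v$. Because $\sigma$ is a NE, the defining inequality $\mu_v(\sigma_{-v},a)\leq\mu_v(\sigma)$ holds for every color $a\in K$; taking $a=\gamma_v$ yields $\mu_v(\gamma)\leq\mu_v(\sigma)$. On the other hand, a strong deviation requires every member of $C$ to strictly improve, so in particular $\mu_v(\gamma)>\mu_v(\sigma)$. These two inequalities are incompatible, so $C$ cannot be a singleton.

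Combining the two cases gives $|C|\geq 2$, as claimed. I do not expect any genuine obstacle here: the only subtlety is the bookkeeping observation that a singleton coalition's deviation \emph{is} a unilateral deviation, which is precisely the object the NE condition forbids. One could alternatively note that this lemma is already implicit in the proof of Proposition \ref{prop4}, whose sub-case (b) of part~1 shows that a single vertex $\{t\}$ with $\delta(t,\sigma,\gamma_t)=0$ would be a strong deviation contradicting the NE property; the present lemma extracts and strengthens that remark to the clean bound $|C|\geq 2$ without any assumption on the deviation target.
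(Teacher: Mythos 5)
Your proof is correct and follows essentially the same argument as the paper: a singleton coalition's strong deviation is exactly a profitable unilateral deviation, contradicting the NE hypothesis. Your explicit handling of the empty-coalition case (via $\sigma\neq\gamma$) is a minor bookkeeping addition the paper leaves implicit.
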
	
\begin{proof}
Suppose that $C=\{v\}$ for a node $v \in V$. If $C$ is a strong deviation, then $v$ is able to change its color in order to improve unilaterally its payoff, namely $\mu_v(\gamma) > \mu_v(\sigma)$. This contradicts the fact that $\sigma$ is NE. 
\end{proof}

The following proposition allows not to consider non-connected subgraphs in the proof of Proposition \ref{prop3_alt}.
	\begin{prop}\label{prop_iso}
Let $G$ be a graph, $K$ the set of all possible colors, $\sigma$ a NE for $G$ and $C\subseteq V$ a minimal subset w.r.t. strongly deviating from a coloring $\sigma$
		to $\gamma$, with $\sigma \neq \gamma$. Then, $G(C)$ is an isolated component of $G$.
	    %{\color{red}, $C$ is an isolated component.}
	\end{prop}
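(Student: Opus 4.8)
The plan is to argue by contradiction, using the minimality of $C$ together with the fact that every vertex of $V\setminus C$ keeps its color when passing from $\sigma$ to $\gamma$. As the surrounding text makes clear, the purpose of this statement in the proof of Proposition \ref{prop3_alt} is to reduce to a \emph{connected} induced subgraph $G(C)$, so the core of the argument is the connectivity of $G(C)$; the isolation clause is the delicate part, and is where I expect the main obstacle to lie.

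First I would prove that $G(C)$ is connected. Suppose not, and write $C=C_1\cup C_2$ with $C_1,C_2\neq\emptyset$, $C_1\cap C_2=\emptyset$, and no edge of $G$ joining $C_1$ to $C_2$; this is precisely what disconnectedness of $G(C)$ means, since $G(C)$ carries all edges of $G$ internal to $C$. Consider the coloring $\gamma'$ that agrees with $\gamma$ on $C_1$ and with $\sigma$ everywhere else, so that $C_2$ is held at its $\sigma$-color. Fix $v\in C_1$: each of its neighbors lies in $C_1$ or in $V\setminus C$, never in $C_2$, and on both of these sets $\gamma'$ and $\gamma$ assign identical colors (they agree on $C_1$ by construction, and on $V\setminus C$ because those vertices never move). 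Hence $\mu_v(\gamma')=\mu_v(\gamma)>\mu_v(\sigma)$ for every $v\in C_1$, so $C_1$ is by itself a strong deviation from $\sigma$. Since $C_1\subsetneq C$, this contradicts the minimality of $C$, and therefore $G(C)$ is connected.

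It then remains to establish the second half of the definition of isolated component, namely that no vertex of $C$ is adjacent to a vertex of $V\setminus C$. Here I would again try to exploit that such an external neighbor $j$ is frozen at color $\sigma_j$, and so influences $\mu_v$ only through the single test $\gamma_v=\sigma_j$ versus $\gamma_v\neq\sigma_j$. The obstacle is that, unlike an internal splitting, a frozen external neighbor does \emph{not} let one peel off a proper sub-coalition: the minimality argument above rules out internal disconnections but says nothing about edges leaving $C$. I would therefore look to combine the best-response inequalities $\delta(v,\sigma,\sigma_v)\le\delta(v,\sigma,a)$ holding at each $v\in C$ with the color-preservation $K_C(\sigma)=K_C(\gamma)$ from Proposition \ref{prop1fiorda} and the fact, proved in Proposition \ref{prop4}, that each vertex moves to the color of an internal neighbor, in an attempt to force the external contributions to vanish. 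If this cannot be pushed through, the conclusion that is in any case sufficient for Proposition \ref{prop3_alt} is the connectivity of $G(C)$; I would flag the full ``no external edge'' requirement as the point demanding the most care.
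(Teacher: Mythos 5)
Your connectivity argument is correct and is in fact more direct than the paper's. The paper first rules out monochromatic connected components of $G(C)$ (via Lemma \ref{lemma:ne_singleton} and a short minimality argument) and then argues that if $G(C)$ had two components $C_1$ and $C_2$, then by Proposition \ref{prop4} every vertex of $C_1$ deviates to the color of a neighbor inside $C_1$, so $C_1$ ``does not need'' $C_2$, contradicting minimality. Your version --- restrict $\gamma$ to one component to get $\gamma'$, observe that every neighbor of a vertex of $C_1$ lies either in $C_1$ or in the frozen set $V\setminus C$, on both of which $\gamma'$ agrees with $\gamma$, hence $\mu_v(\gamma')=\mu_v(\gamma)>\mu_v(\sigma)$ --- reaches the same contradiction without invoking Proposition \ref{prop4} or the monochromatic-component step, and is the tighter of the two arguments.

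On the isolation clause your instinct is right, and the gap you flag is not one you could have closed: the paper's own proof never addresses it either. Under the paper's definition, an isolated component $H$ must have no edges from $V(H)$ to $V\setminus V(H)$; the proof in the paper only establishes that $G(C)$ has a single connected component and says nothing about edges from $C$ to $V\setminus C$. Indeed the statement in that literal strong form cannot be the intended one: Lemma \ref{prop_PI} and Proposition \ref{prop3_alt} explicitly account for edges between $C$ and $V\setminus C$, all of which would vanish if $C$ were genuinely isolated in $G$. The content actually used downstream --- and actually proved both by you and by the paper --- is the connectivity of $G(C)$, so your proposal matches the real force of the proposition; the discrepancy lies in the statement, not in your argument.
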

	\begin{proof}
First, we show that no isolated component of $G(C)$ can be monochromatic, i.e. every isolated component $H = (V_H, E_H)$ of $G(C)$ is such that there is no color $a$ in $K$ such that for every vertex $v$ in $V_H$, $\sigma_v = a$. In other words, we will show that each isolated component of $G(C)$ must have representatives of at least two colors. 
	
From Lemma \ref{lemma:ne_singleton}, $|V_H| \geq 2$.
% There are two possible situations:
% \begin{enumerate}
% \item If $V_H$ is the singleton of $h$ for some vertex $h \in V$, then since $h$ has no neighbors in $G(C)$, it could have deviated towards $\gamma$ even without making a coalitions with the other vertices in $C$. Since, we are assuming that $\sigma$ is a NE, then we get a contradiction.
%\item 
% If $V_H$ contains more than one vertex of $V$, 
Assuming by absurd that all members of $V_H$ have the same color $a$, we have that in the profitable deviation each vertex of $V_H$ helps the other vertices of $V_H$ to increase their profit. This means that each vertex of $V_H$ would make a greater profit by deviating itself towards $\gamma$, contradicting the minimality of $C$.
%\end{enumerate}
%%%%%
Now, we are able to show that under the considered assumptions, $G(C)$ is an isolated component. We reason by contrapositive. Suppose that the number of isolated components of C is greater than $1$. We have just shown that none of these isolated components can be monochromatic. Let us consider two of these isolated components $C_1$ and $C_2$ and denote by $d(u, v)$ the length of the shortest path between the vertices $u \in C_1$ and $v \in C_2$, in terms of edges. If $\min_{u \in C_1, v \in C_2} d(u, v) \geq 1$, then the shortest path between any vertex of $C_1$ and any vertex of $C_2$ passes through at least one vertex of $V \backslash C$. Since by the Proposition \ref{prop3_alt}, each vertex of $C$ can only deviate towards the color of its neighbor, a vertex of $C_1$ deviates towards the color in $\sigma$ of another vertex of $C_1$. The same situation happens for $C_2$. Hence, $C_1$ does not need $C_2$ to deviate, and vice versa. This contradicts the minimality of $C$ in deviating strongly.
	%%%
Therefore, $G(C)$ must be an isolated component of $G$.
\end{proof}
	
Now, we prove a Theorem stating that for all the cases where a minimal strongly deviating coalition $C$ has at most 7 vertices, the cut value increases. 
%This, coupled with the fact that if $|C|=7$, then $\|K_C(\sigma)|=3$, allows to state that optimal colorings are
%resilient to deviations of minimal subsets with at most 7 vertices.
	\begin{theorem}
	\label{teo4}
		Let $\sigma$ be a NE for $\mathcal{G}$ and $C\subseteq V$
		be a minimal subset w.r.t. strongly deviating from a coloring $\sigma$
		to another $\gamma$. If $|C|\leq7$,
		then $\Delta S (\sigma,\gamma)>0$.
	\end{theorem}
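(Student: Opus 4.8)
The plan is to reduce the statement to a single combinatorial bound on $P_C(\sigma,\gamma)$ and then feed it into Lemma \ref{prop_PI}, which gives $\Delta S(\sigma,\gamma)\ge |C|-P_C(\sigma,\gamma)$; hence it suffices to prove $P_C(\sigma,\gamma)<|C|$ whenever $|C|\le 7$. First I would collect the standing reductions already available: Lemma \ref{lemma:ne_singleton} gives $|C|\ge 2$, Proposition \ref{prop_iso} lets us assume $G(C)$ is a single non-monochromatic component, and Proposition \ref{prop1fiorda} fixes the palette, $K_C(\gamma)=K_C(\sigma)$. Writing $c:=|K_C(\sigma)|\ge 2$ and, for each color $a$, $C_a:=\{v\in C:\sigma_v=a\}$ with $s_a:=|C_a|$, we have $\sum_a s_a=|C|$.

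The key step is a per-color estimate for $P_C(\sigma,\gamma)$. Every edge counted by $P_C(\sigma,\gamma)$ is monochromatic in $\sigma$, so both its endpoints lie in one class $C_a$; and because the coalition deviates, each vertex of $C_a$ is recolored in $\gamma$ with a color of $K_C(\gamma)\setminus\{a\}$, a set of exactly $c-1$ colors. Thus the edges inside $C_a$ that become bichromatic form the cut of a partition of $C_a$ into at most $c-1$ blocks, whose cardinality is at most the maximum $(c-1)$-cut of the complete graph $K_{s_a}$; denote this value by $g(s_a,c-1)$, i.e. the maximum of $\sum_{i<j}p_ip_j$ over partitions of $s_a$ into at most $c-1$ parts. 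Summing over colors yields
\[
P_C(\sigma,\gamma)\ \le\ \sum_{a\in K_C(\sigma)} g(s_a,\,c-1).
\]
I would remark that this one inequality already reproduces every ``$\max P_C$'' entry of Table \ref{tab:prop3}, so that Proposition \ref{prop3_alt} is recovered as the special cases $c\in\{|C|-2,|C|-3\}$; in particular, for $c=2$ one has $g(s_a,1)=0$, so $P_C(\sigma,\gamma)=0$ and the deviation is merely a color swap.

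It then remains to run a finite verification for $|C|\le 7$. For each $c\ge 2$ and each partition $\sum_a s_a=|C|$ one evaluates $\sum_a g(s_a,c-1)$; a short check shows that its maximum over all admissible configurations with $|C|\le 7$ equals $6$, attained uniquely when $|C|=7$, $c=3$, and the class sizes are $(5,1,1)$. Consequently $P_C(\sigma,\gamma)\le 6<7\le|C|$ in that borderline case and is strictly smaller in every other case, so $\Delta S(\sigma,\gamma)\ge |C|-\sum_a g(s_a,c-1)\ge 1>0$ throughout.

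The hard part is precisely this borderline configuration $(5,1,1)$ at $|C|=7$: a single monochromatic class of five vertices may split into the two remaining colors, and the estimate $g(5,2)=\lfloor 25/4\rfloor=6$ is tight, leaving only the slack $\Delta S\ge 1$. This is also exactly where the bound $7$ comes from, since already at $|C|=8$ there are configurations (for instance a single class of six vertices with $c=3$, where $g(6,2)=9$) for which $\sum_a g(s_a,c-1)\ge|C|$, so the method no longer forces $\Delta S>0$; pushing past seven would require extra structural input beyond Lemma \ref{prop_PI}, such as controlling the dropped term $P_C(\gamma,\sigma)$.
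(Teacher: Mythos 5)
Your proof is correct, and although it runs on the same engine as the paper's --- Lemma \ref{prop_PI} plus an upper bound on $P_C(\sigma,\gamma)$ --- the way you obtain that bound is genuinely different and cleaner. The paper proceeds by cases: it quotes the $5$-SE result and Proposition 4 of \cite{mt} to narrow $|C|\in\{6,7\}$ down to $|K_C(\sigma)|=3$ with $|C|=7$, treats $|K_C(\sigma)|\in\{|C|-3,|C|-2\}$ separately in Proposition \ref{prop3_alt} via a table of configurations, and then disposes of the four remaining class-size profiles $(3,2,2)$, $(3,3,1)$, $(4,2,1)$, $(5,1,1)$ one at a time, the last by enumerating the $5$-vertex, $7$-edge graphs of Figure \ref{fig:loop3} and arguing about triangles. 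Your single inequality $P_C(\sigma,\gamma)\le\sum_a g(s_a,c-1)$ --- valid exactly as you say, since by Proposition \ref{prop1fiorda} every $\sigma$-monochromatic edge that becomes bichromatic lies inside some class $C_a$ whose vertices are redistributed among the $c-1$ remaining palette colors --- subsumes all of this: it reproduces (and in configurations $1$--$3$ sharpens) every maximum the paper computes, recovers Proposition \ref{prop3_alt} and Table \ref{tab:prop3} as special cases, removes the dependence on Proposition 4 of \cite{mt}, and replaces the informal graph enumeration for the $(5,1,1)$ case with the one-line fact $g(5,2)=6$. Your closing remark that the method caps out at $|C|=7$ because $g(6,2)=9\ge 8$ correctly identifies why the paper stops there. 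Two minor notes: the reductions via Lemma \ref{lemma:ne_singleton} and Proposition \ref{prop_iso} are not actually used in your argument and could be dropped, and it is worth stating explicitly that $\gamma_v\in K_C(\gamma)\setminus\{\sigma_v\}$ combines the definition of a deviation with Proposition \ref{prop1fiorda}, since without the palette-preservation step the bound would degrade to $g(s_a,k-1)$ and fail already at $(5,1,1)$ for $k\ge 4$.
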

	\begin{proof}
In order to prove the Theorem, we need to consider all the cases for which $|C| \leq 7$. Notice that the cases $|C| \leq 5$ are already covered by \cite{mt}. 
Additionally, in Proposition 4 of \cite{mt}, it has been proven that $C$ is a minimal strongly deviating coalition for $|K_C(\sigma)| \in \{2, 5, 6\}$ with $|C| = 6$, and $|K_C(\sigma)| \in \{2, 6, 7\}$ with $|C| = 7$.
From Proposition \ref{prop3_alt}, we know that in the cases $|C|=6$ and $|C|=7$, if $|K_C(\sigma)| \geq 6 - 3 = 3$ or $|K_C(\sigma)| \geq 7 - 3 = 4$, then $\Delta S(\sigma, \gamma) > 0$.  
Then, only the case $|K_C(\sigma)| = 3$ must be accounted.
% 	If $|C| \leq 7$, then $S (\gamma)> S (\sigma)$; indeed every time a coalition $C$ of 7 or less vertices deviates profitably from a stable coloring to another one, the cut value of the second coloring is greater than the cut value of the first. 
% Therefore, now we will only focus on subsets of 7 vertices. From Proposition \ref{prop3} it follows that it is sufficient to check that the statement holds if $|K_C(\sigma)| = 3$.
Let assume that $K_C(\sigma)=\{a,b,c\}$.
The only possible configurations are the following four:
\begin{enumerate}
\item[\textit{Configuration 1}.] $|C_{a}(\sigma)|=3$, $|C_{b}(\sigma)|=|C_{c}(\sigma)|=2$.
\item[\textit{Configuration 2}.] $|C_{a}(\sigma)|=|C_{b}(\sigma)|=3$, $|C_{c}(\sigma)|=1$.
\item[\textit{Configuration 3}.] $|C_{a}(\sigma)|=4$, $|C_{b}(\sigma)|=2$, and $|C_{c}(\sigma)|=1$.
\item[\textit{Configuration 4}.] $|C_{a}(\sigma)|=5$, $|C_{b}(\sigma)|=|C_{c}(\sigma)|=1$.
%\item{\color{red} Fare figura di queste configurazioni...}
\end{enumerate}
These configuration are briefly reported in Figure \ref{fig:conf7}.
\begin{figure}[ht]
    \centering
    \includegraphics[scale=0.7]{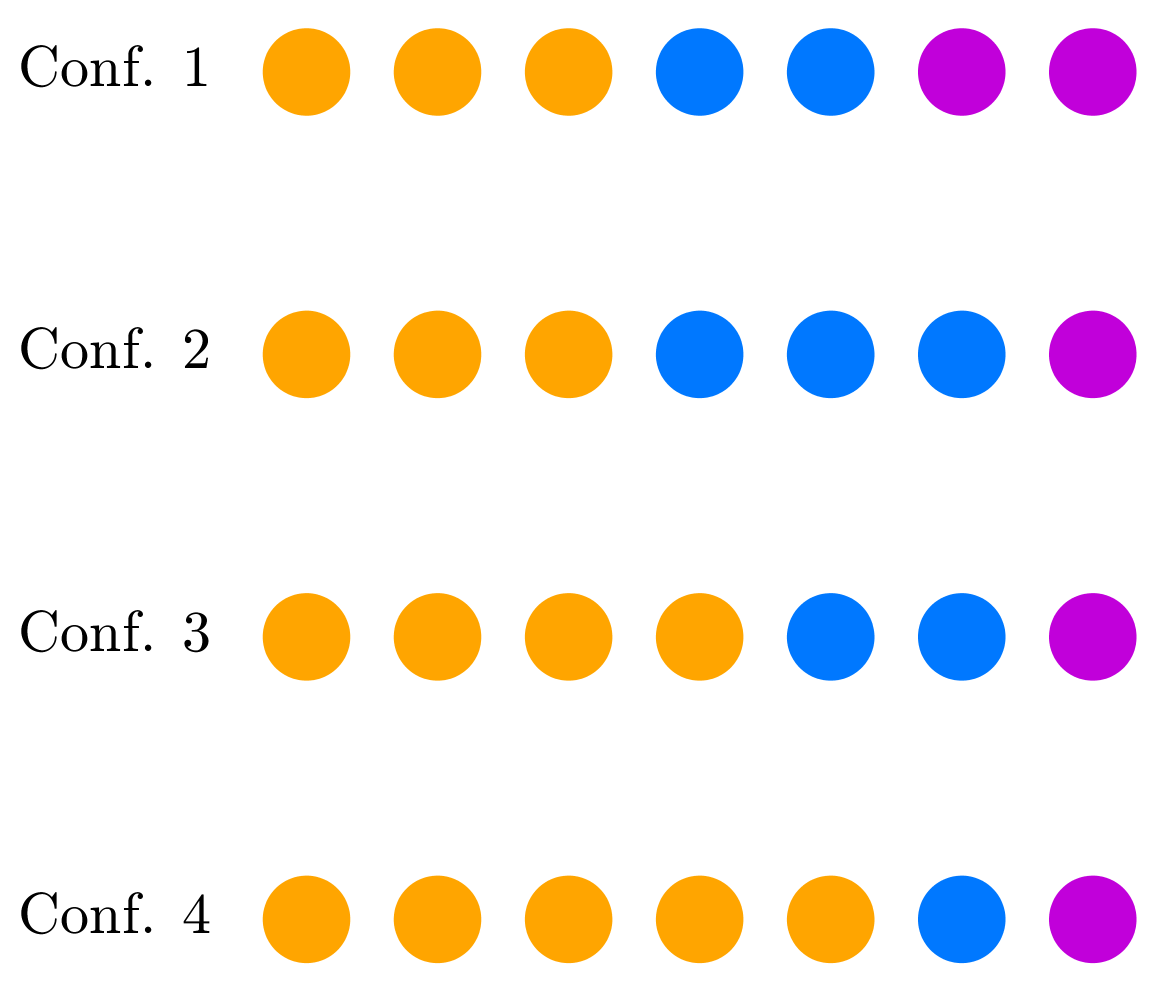}
    \caption{Pictorial representation of the coalition colorings covered by Theorem \ref{teo4}.}
    \label{fig:conf7}
\end{figure}
Concerning configurations $1$ and $2$, the maximum value reachable by $P_C(\sigma, \gamma)$ is less than 7; indeed
%, for Proposition \ref{prop_iso}, 
we have that $P_C(\sigma, \gamma)=5$ for configuration a and $P_C(\sigma, \gamma) = 6$ in configuration b. Therefore, $\Delta S (\sigma, \gamma) \geq 1$.
Concerning configuration $3$, we have that $P_C(\sigma, \gamma) = 7$ if and only if each of the four vertices of color $a$ in $\sigma$ deviates towards a different color from the other three. It can easily be seen that at least two other colors and thus vertices are needed to guarantee this deviation, even if a vertex of color $b$ and one of color $a$ take on each other's colors and that the remaining vertex of color $b$ deviates together with another vertex of color $a$ towards the color $c$. Therefore, $P_C(\sigma, \gamma)<7$ and $\Delta S(\sigma, \gamma) \geq 1$. 
% {\color{red}Forse questa parte non serve perch\'e abbiamo gi\`a dimostrato che $\Delta S(\sigma, \gamma) \geq 1$!}. In particular, it can be shown that the maximum value that $P_C(\sigma, \gamma)$ can assume is 5. Indeed, suppose that $P_C(\sigma, \gamma) = 6$.
% There are two possibilities:
% \begin{enumerate}
% \item {\color{red}$ G (C_ {color_1} (\sigma)) $ levare definizione di grafo completo ed inserire la definizione di sottografo indotto da un sottoinsieme di nodi} is the complete graph having four vertices. Each of the four vertices must deviate towards a different color from the other three and this is not possible, since we have just three colors.
% \item {\color{red}$ G (C_ {color_1}( (\sigma)) $} is not the complete graph having 4 nodes. Suppose that two of the non-adjacent vertices of color $ color_1 $ deviate profitably towards the color $ color_2 $, another vertex towards the color $ color_3 $. Then, either the fourth vertex of color $ color_1 $ does not deviate, contradicting the minimality of $ C $, or it deviates towards a fourth color $ color_4 $, contradicting the hypothesis that $ | C | = 7 $.
% \end{enumerate}
% Thus $P_C(\sigma, \gamma)\leq 5$ and $\Delta S(\sigma, \gamma) \geq 7-5=2$.\\
Finally, consider configuration 4. All the feasible values that $P_C(\sigma, \gamma)$ can assume are:
\begin{itemize}
\item $P_C(\sigma, \gamma)=10$. We reason similarly to configuration 3, obtaining that the vertices of color $a$ need to change towards five different colors, while only $2$ colors are available for them. Therefore, if $ | C | =  7$ it cannot happen that $ P_C(\sigma, \gamma) =  10$.
\item $P_C(\sigma, \gamma)=9$. We reason similarly to configuration 3 when $ P_C(\sigma, \gamma) = 6 $, and we obtain that the five vertices of color $ a $ need to change towards either four or five different colors to be able to deviate, while only $2$ colors are available for them.
\item $P_C(\sigma, \gamma)=8$. We reason similarly to configuration 3 when $ P_C(\sigma, \gamma) = 6 $, and we obtain that the five vertices of color $a$ need either three or four or five different colors to be able to deviate, while only $2$ colors are available for them.
\item $P_C(\sigma, \gamma)=7$. We observe that it is not possible that $ C_{a} $ indicates a subgraph without loops of length 3. Indeed, all possible graphs of $5$ nodes and $7$ edges are reported in Figure \ref{fig:loop3}. The presence of a loop of length 3 and the fact that only two colors are available, implies that $P_C(\sigma, \gamma) \neq 7$.
\end{itemize}
Then $P_C(\sigma, \gamma) \leq 6$, and
therefore, in any case $ \Delta S (\sigma, \gamma)> 0 $, i.e. $ S (\gamma)> S (\sigma) $.
	%%%%

	\end{proof}
	
\begin{figure}
    \centering
    \includegraphics[scale=.7]{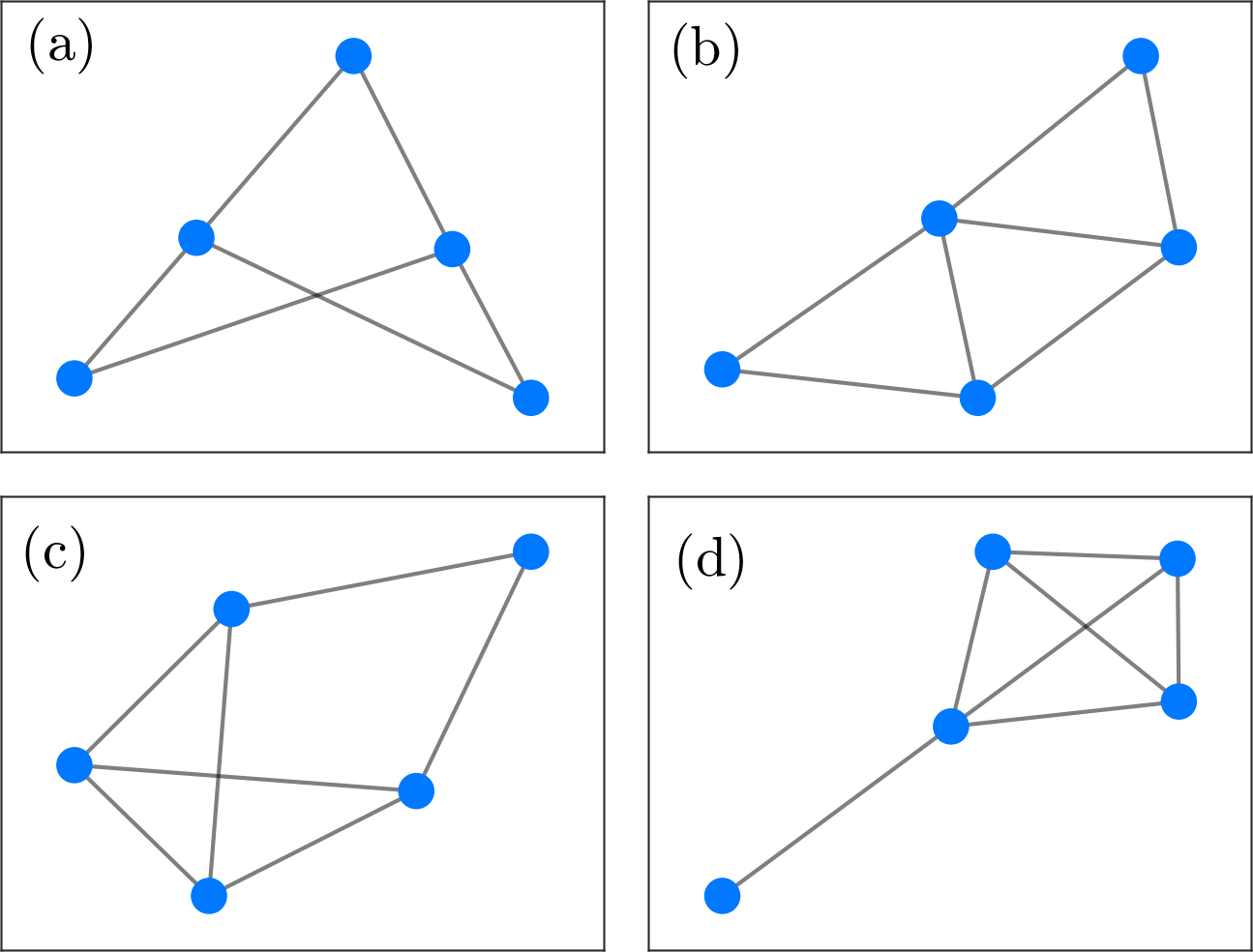}
    \caption{All possible ways in which $5$ nodes can be connected according to configuration 4 of Theorem \ref{teo4} for $P_C(\sigma, \gamma) = 7$.}
    \label{fig:loop3}
\end{figure}
With reference to Example \ref{example1}, Theorem 2 can be read as follows: if a group has more than seven employees then the number of pairs of employees who do not want to work together and who participate in the same project is reduced.
	
We are now ready to prove the main result of the paper:
	\begin{theorem}\label{thm:final_result}
	\label{teo5}
		Let $\sigma$ be an optimal coloring for $\mathcal{G}$.
		Then $\sigma$ is also a 7-Strong equilibrium.
	\end{theorem}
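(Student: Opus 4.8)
The plan is to argue by contradiction, reducing everything to Theorem \ref{teo4} together with the fact (recalled in the Preliminaries, see \cite{mt}) that every optimal coloring is a NE. Assume $\sigma$ is optimal but \emph{not} a 7-Strong equilibrium. By the definition of $q$-SE with $q=7$, this means there exists a coalition $C \subseteq V$ with $|C| \leq 7$ that strongly deviates from $\sigma$ to some coloring $\gamma$, i.e. every member of $C$ strictly improves its utility while the colors outside $C$ stay fixed.

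The first step is to reduce to a \emph{minimal} deviating coalition. If $C$ is not already minimal, then by definition some proper subset of $C$ can itself perform a strong improvement; iterating this shrinking procedure (which must terminate, since $C$ is finite) produces a coalition $C^* \subseteq C$ that is minimal with respect to strongly deviating from $\sigma$, inducing a coloring $\gamma^*$ that agrees with $\sigma$ outside $C^*$. The key bookkeeping point is that $|C^*| \leq |C| \leq 7$, so $C^*$ still lies within the regime handled by Theorem \ref{teo4}.

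Next, because $\sigma$ is optimal it is in particular a NE, so the hypotheses of Theorem \ref{teo4} are met for the minimal strongly deviating coalition $C^*$ of size at most $7$. Applying it yields $\Delta S(\sigma, \gamma^*) > 0$, that is $S(\gamma^*) > S(\sigma)$. Since the social welfare satisfies $SW(\cdot) = 2S(\cdot)$, this gives $SW(\gamma^*) > SW(\sigma)$, contradicting the assumption that $\sigma$ maximizes the social welfare. Hence no strongly deviating coalition of size at most $7$ can exist, and $\sigma$ is a 7-Strong equilibrium.

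I expect essentially all of the genuine difficulty to have been absorbed already into Theorem \ref{teo4} (and the structural Propositions \ref{prop3_alt}, \ref{prop4}, \ref{prop_iso} feeding it). The only place here that warrants care is the passage to the minimal sub-coalition: one must verify both that the shrinking terminates and, more importantly, that the resulting minimal coalition still satisfies $|C^*| \leq 7$, so that Theorem \ref{teo4} is legitimately applicable. Everything beyond that is a direct chaining of the NE property of optimal colorings with the strict cut increase guaranteed by Theorem \ref{teo4}, closed off by the optimality of $\sigma$.
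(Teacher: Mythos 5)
Your proposal is correct and follows exactly the route the paper intends: the paper's own proof is a one-line reduction to Theorem \ref{teo4}, and your filled-in details (passing to a minimal sub-coalition of size at most $7$, using that an optimal coloring is a NE, and deriving the contradiction $S(\gamma^*)>S(\sigma)$ with optimality) are precisely the steps the paper leaves implicit, as confirmed by the remark following the theorem.
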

	\begin{proof}
	Assume that the optimal coloring is not a 7-SE. From Theorem \ref{teo4} we can easily conclude that this would lead to a contradiction. 
%		Suppose first that $\sigma$ is a 5-stable equilibium and let
%		$C\subseteq V$ be such that $|C|=6$ and $C$ is not minimal with respect to
%		strongly deviating from $\sigma$ (towards some coloring $\gamma$);
%		we want to show that this leads to a contradiction.
%		
%		We have already shown that, if $F\subseteq V$ is minimal
%		with respect to strongly deviating from $\sigma$ (towards some coloring $\eta$)
%		and $|F|\leq 7$, then $S(\eta)>S(\sigma)$. Since $C$ is not minimal
%		with respect to strongly deviating from $\sigma$, there is a subset $ D\subset C$ such that
%		$D$ is minimal with respect to strongly deviating from $\sigma$ (towards some other
%		coloring $\iota$). Since we assumed that $\sigma$ is an optimal coloring
%		and since $D$ is minimal with respect to strongly deviating from $\sigma$ (towards
%		$\iota$) and $|D|< 6$, this leads to a contradiction because \(\displaystyle
%		S(\iota)>S(\sigma)=\max_{\alpha\in K^n}S(\alpha)\).
%		
%		{\raggedright Therefore, optimal colorings are also 6-stable equilibria.}
%		
%		By repeating the same reasoning with $|C|=7$ and with $\sigma$ as a $6$-stable
%		equilibrium, we have that the same holds and, from this, that optimal colorings
%		are 7-stable equilibria.
	\end{proof}

In our example, Theorem \ref{thm:final_result} can be read as follows: if management allocates employees to projects in order to minimize the number of pairs of people who do not want to work together and who are assigned to the same project, then groups of at most seven employees are unable to apply for a transfer to other projects to minimize the number of undesired collaborations.
Therefore, it can be seen that the seven employees have no way to move to other projects without having to work alongside people they do not want to work with.	
Figure \ref{fig:fig2} depicts all the optimal colorings available for the network introduced in Example \ref{example1}. In particular, we assign them the following names: $\sigma$ is the benchmark case, reported in Figure \ref {fig:fig1}.c and in Figure \ref{fig:fig2}.a, $\gamma^1$, $\gamma^2$, $\gamma^3$, $\gamma^4$ and $\gamma^5$ are all the other optimal configurations, shown in Figures \ref{fig:fig2}.b-\ref{fig:fig2}.f. The last five optimal colorings have been compared to $\sigma$, and the corresponding deviating coalition $C$ between each couple of optimal colorings has been computed in terms of deviating colors.  Specifically, the members of the coalition $\sigma$ that deviate to another optimal coloring $\gamma^i$, with $i \in \{1, \ldots, 5\}$ are highlighted by bold black borders. We notice that in all cases, the payoffs $\mu_C(\gamma^i)$ collected by the members of these coalitions are equal to $\mu_C(\sigma)$, the one obtained in the benchmark $\sigma$. This result shows that no strong deviating coalitions of $7$ (or more) members exist in the considered game. Additionally, we also verified that $\mu_v(\sigma) = \mu_v(\gamma^i) ~\forall v \in C$ and for $i \in \{1, \ldots, 5\}$.
%Some representative situations explaining the results of Theorem 3 are depicted in Figures \ref{fig:esempio2}, \ref{fig:esempio3} and \ref{fig:esempio4} , in agreement with the application proposed in Example 1.

\begin{remark}
Note that in Theorem \ref{teo5}, we only use the fact that any optimal coloring is resistant to strong deviation done by minimal subsets of at most 6 nodes. 	
	The same result also holds for non-minimal subsets (w.r.t. a strong
	deviation) containing at most 8 nodes. Indeed, a non-minimal subset of 8 nodes
	that deviates from an optimal coloring would have a minimal subset deviating
	from the same coloring and containing at most 7 nodes, and this is not possible.
	\end{remark}
	
\begin{figure}
    \centering
    \includegraphics[width=\textwidth]{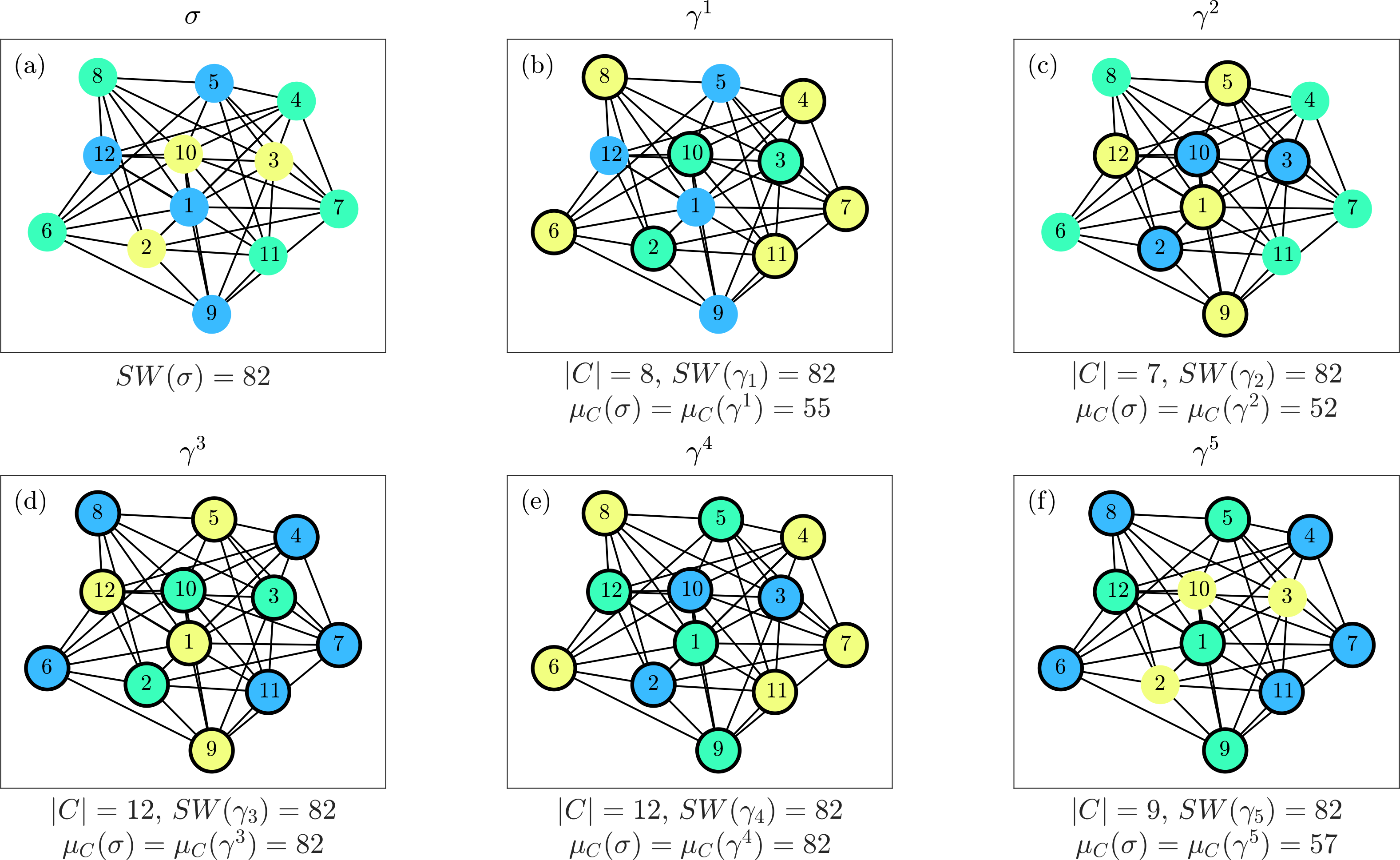}
    \caption{Comparison of the six optimal colorings ($\sigma$, $\gamma^1$, $\gamma^2$, $\gamma^3$, $\gamma^4$ and $\gamma^5$) in a network with $12$ nodes  and $3$ colors. 
    Subplot (a): optimal coloring $\sigma$.
    Subplots (b)-(f): comparison of the optimal colorings  $\sigma$,   $\gamma^1$, $\gamma^2$, $\gamma^3$, $\gamma^4$ and $\gamma^5$. Nodes with black edges represent the deviating coalitions from $\sigma$.}
    \label{fig:fig2}
\end{figure}	
	
\begin{figure}
    \centering
    \includegraphics[width=0.7\textwidth]{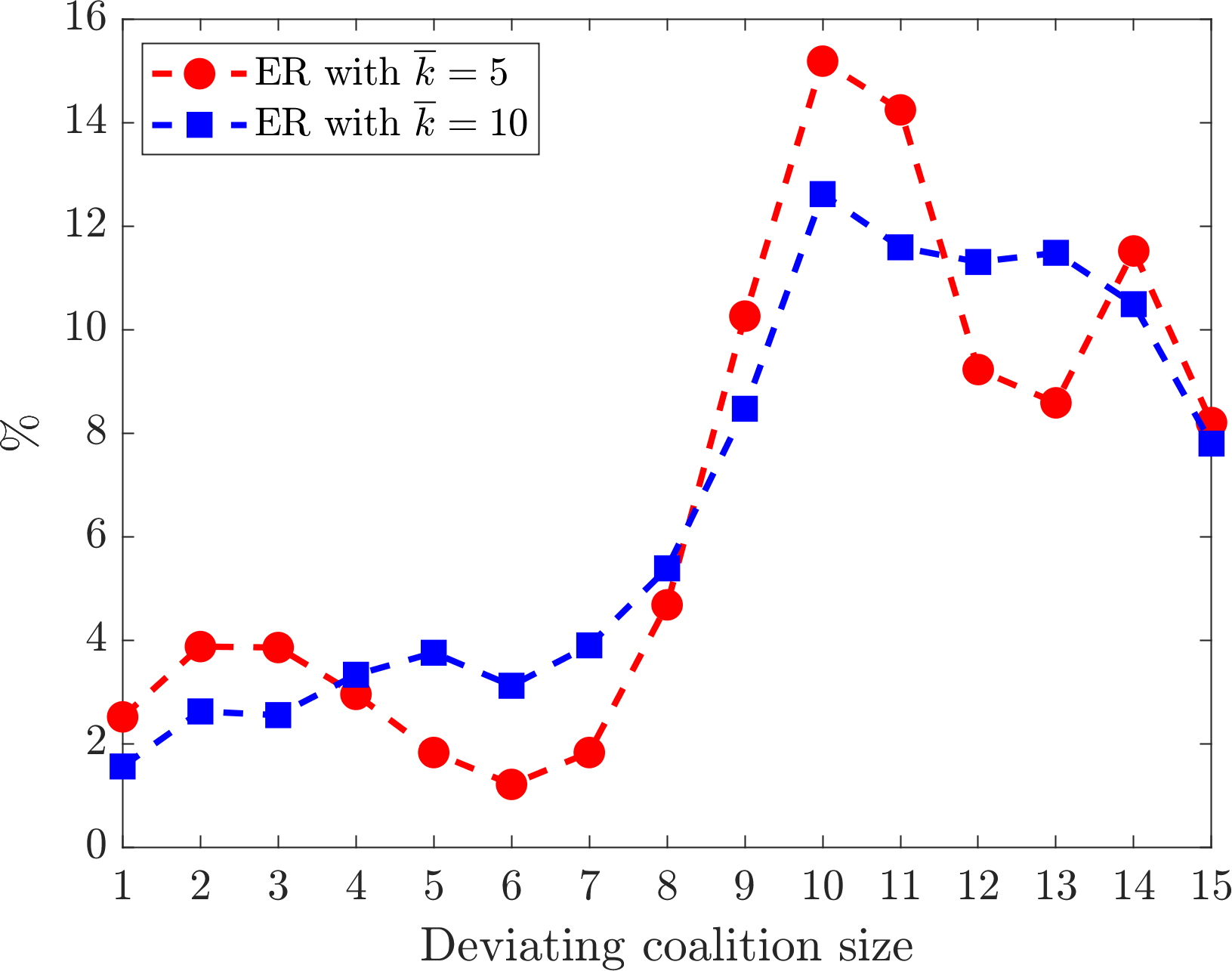}
    \caption{Distribution of the size of the deviating coalitions for different average degrees of $n=15$ Erd\"os-Reny\`i random graphs.}
    \label{fig:fig3}
\end{figure}

\section{Conclusions and future developments}

In this paper, some results on the robustness to $7$-deviating coalitions of optimal colorings is proposed. The approach is grounded on game theory, thus making us able to use the concept of Nash equilibrium of games for discussing the presence of coalitions diverging from optimal configurations.  

The main result of the study is Theorem 3, which proves that optimal colorings of the graph are 7-{\it Strong equilibria}. This means that in order the system is able to switch from one configuration to another, a coalition of more than $7$ diverging elements must be found. Anyway, smaller coalitions of optimal colorings may deviate towards other existing optimal solutions. 

In order to extend the present findings to possibly larger coalitions, thus making the optimal coloring strong enough to prevent divergent solutions, we formulate a conjecture, holding for groups of nodes composed by any number of elements. 

Figure \ref{fig:fig3} motivates the  conjecture. A numerical experiment, considering two groups of Erd\"os-Reny\`i random graphs $10$ of $n=15$ nodes with average degrees $5$ and $10,$ has been developed. The distribution of the optimal colorings of $10$ run for each of the 2 cases are reported. One can notice that optimal colorings of any size are present. Moreover, none of them is strong, since the payoff of deviating  coalitions is never increased. 

	\begin{conjecture}
	Each optimal coloring is {\it SE}.
	\end{conjecture}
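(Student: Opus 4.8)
The plan is to prove the conjecture by the same reduction that powers Theorem \ref{teo5}, but stripped of the cardinality cap. Concretely, I would establish the following \emph{core claim}: for every $NE$ $\sigma$ and every minimal subset $C\subseteq V$ that strongly deviates from $\sigma$ to some $\gamma$, one has $\Delta S(\sigma,\gamma)>0$, with \emph{no} restriction on $|C|$. Granting this, the conjecture follows exactly as in Theorem \ref{teo5}: an optimal coloring is a $NE$ (cf.\ \cite{mt}) and satisfies $\Delta S(\sigma,\gamma)\le 0$ for every $\gamma$; if some coalition of arbitrary size could strongly deviate from it, then it would contain a minimal strongly deviating sub-coalition $C$ (deviating to an induced $\gamma$), and the core claim would give $S(\gamma)>S(\sigma)$, contradicting optimality. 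Hence no coalition of any size can strongly deviate, i.e.\ $\sigma$ is an $n$-$SE$, which is precisely $SE$.

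The engine for the core claim is Lemma \ref{prop_PI}, which gives the always-valid identity
\[
\Delta S(\sigma,\gamma)=\sum_{v\in C}\big(\mu_v(\gamma)-\mu_v(\sigma)\big)-P_C(\sigma,\gamma)+P_C(\gamma,\sigma)\ \ge\ |C|-P_C(\sigma,\gamma).
\]
Thus the whole problem collapses to the single uniform estimate $P_C(\sigma,\gamma)<|C|$ for every minimal strongly deviating coalition (or, exploiting optimality and $P_C(\gamma,\sigma)\ge 0$, the slightly weaker $P_C(\sigma,\gamma)-P_C(\gamma,\sigma)<|C|$, which measures the increase of the internal cut of $G(C)$). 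I would decompose $P_C(\sigma,\gamma)=\sum_{a\in K_C(\sigma)}c_a$, where $c_a$ counts the edges inside the $\sigma$-color class $C_a(\sigma)$ that become bichromatic under $\gamma$; by Proposition \ref{prop1fiorda} the deviation reuses only the colors of $K_C(\sigma)$, and by Proposition \ref{prop4} each vertex moves to the color of a neighbor \emph{inside} $C$, so each $c_a$ is the weight of a $|K_C(\sigma)|$-cut of the induced subgraph $G(C_a(\sigma))$.

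To force $\sum_a c_a<|C|$ I would combine three ingredients already isolated in the paper: (i) the $NE$ min-color property $\delta(v,\sigma,\sigma_v)\le\delta(v,\sigma,b)$ for all $b$, which caps the number of same-color neighbours of each vertex; (ii) minimality together with Proposition \ref{prop_iso}, which forces $G(C)$ to be connected and forbids any proper sub-coalition from improving on its own; and (iii) the arithmetic scarcity of target colors, which is exactly the obstruction exploited in Table \ref{tab:prop3} and in the configuration-$4$ discussion of Theorem \ref{teo4} (the ``loop of length $3$ with only two available colors'' forcing a monochromatic $\gamma$-edge). The natural route is induction on the \emph{deficiency} $|C|-|K_C(\sigma)|$, generalizing Proposition \ref{prop3_alt} from the two bands it covers to every deficiency: whenever some color class is large enough to push $P_C(\sigma,\gamma)$ toward $|C|$, the bounded palette must create a monochromatic edge inside that class in $\gamma$, contradicting the count.

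The hard part is precisely this uniform bound. As $|C|\to\infty$ the number of admissible color-class size distributions (the pigeon-hole cases behind Figure \ref{fig:loop3} and Table \ref{tab:prop3}) grows without bound, and for each one must rule out the ``dense'' colorings where $P_C(\sigma,\gamma)\ge|C|$ using only scarcity of colors and minimality; enlarging $C$ can manufacture new intra-class edges faster than new members, so the finite case analysis that suffices through $|C|=7$ does not obviously close under induction. I expect the real obstacle to be the absence of a single monovariant — for instance a weighting assigning to each ``broken'' intra-class edge a charge strictly below $1$ per coalition member, provably enforceable by minimality and the $NE$ condition — whose existence would immediately yield $P_C(\sigma,\gamma)<|C|$ in all dimensions. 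Producing such a charging scheme (or an equivalent global argument) is exactly what currently keeps the statement at the level of a conjecture.
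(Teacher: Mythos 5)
This statement is a \emph{conjecture} in the paper: the authors give no proof of it, only the bounded result (Theorem \ref{teo5}, optimal colorings are 7-$SE$), numerical evidence (Figure \ref{fig:fig3}), and suggested directions (defective colorings, structure of optimal colorings). So your proposal cannot be checked against a paper proof; it can only be judged as a self-contained argument, and as such it has a genuine gap --- one you candidly acknowledge yourself. Your reduction is sound and faithfully mirrors the paper's machinery: extracting a minimal strongly deviating sub-coalition, invoking Lemma \ref{prop_PI} to get $\Delta S(\sigma,\gamma)\ge |C|-P_C(\sigma,\gamma)$, and using optimality of $\sigma$ to derive a contradiction is exactly the skeleton of Theorems \ref{teo4} and \ref{teo5}. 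But the entire content of the conjecture is then concentrated in your ``core claim,'' the uniform bound $P_C(\sigma,\gamma)<|C|$ (or $P_C(\sigma,\gamma)-P_C(\gamma,\sigma)<|C|$) for minimal coalitions of \emph{arbitrary} size, and for that you offer only a plan (induction on $|C|-|K_C(\sigma)|$, a hypothetical charging scheme), not an argument. A proof sketch whose pivotal lemma is itself conjectural is not a proof.

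Moreover, there is a concrete reason the bound does not follow from the ingredients you list. Color scarcity and the pigeon-hole bookkeeping of Table \ref{tab:prop3} are intrinsically local: a single $\sigma$-color class $C_a(\sigma)$ of size $t$ can, combinatorially, contribute up to order $t^2/4$ broken intra-class edges even when only two target colors are available (a near-balanced bipartition of a dense $G(C_a(\sigma))$), while it adds only $t$ to $|C|$. The $NE$ condition $\delta(v,\sigma,\sigma_v)\le\delta(v,\sigma,b)$ caps same-color neighbors only relative to other-color neighbors, so it does not prevent $G(C_a(\sigma))$ from being dense inside a large coalition. Hence ``arithmetic scarcity plus minimality,'' which suffices through $|C|=7$ precisely because the case analysis of Figure \ref{fig:loop3} is finite, cannot by itself yield $P_C(\sigma,\gamma)<|C|$ asymptotically; the strong-improvement inequalities $\mu_v(\gamma)\ge\mu_v(\sigma)+1$ and the compensating term $P_C(\gamma,\sigma)$ would have to be exploited in an essentially global way that neither you nor the paper currently provides. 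Note also that unweightedness is essential here (the paper recalls that in the weighted setting optimal profiles need not even be 4-$SE$ for $k\ge 3$), so any candidate monovariant must use the integrality of payoff gains --- a further sign that the missing step is substantive rather than routine. In short: your framing correctly identifies where the difficulty lives, but the statement remains open, and your proposal is a research program, not a proof.
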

	
	To prove Conjecture 1, it is necessary
	to find out more about the properties of optimal colorings. 
	For example, one could study how the neighborhood of a deviating subset
	is colored in an optimal coloring or how clustered colors are in an optimal	coloring.
	
	Another approach is to study the graph induced by nodes
	whose utility is lower than its degree.
	
	Alternatively, one could try to exploit defective colorings \cite{Defcolor1}:
	a ($ k,m $) \textit{defective coloring} (or ($ k,m $)-coloring) for a graph $G$
	is a coloring of $G$ with $k$ colors such that each node has at most $m$
	neighbors of the same color as itself.
	We conjecture that any optimal coloring that is a ($ k $,1)-coloring (respectively,
	($ k $,2)-coloring) for a ($ k $,1)-colorable graph that is not $k$-chromatic
	(respectively, for a ($ k $,2)-colorable graph that is not ($k$,1)-colorable)
	is a strong equilibrium. If it were so, it would also imply that Conjecture 1 is
	true for both planar graphs and toroidal graphs \cite{Defcolor2}.

	Lastly, one could study how strong an optimal coloring is in other games similar
	to the max $k$-cut game, like the generalized graph $k$-coloring games
	\cite{generalized} or the ones from \cite{all-or-nothing} and \cite{gourves-monnot-3}.
	
% \section*{Acknowledgments} We would like to thank you for \textbf{following
% the instructions above} very closely in advance. It will definitely
% save us lot of time and expedite the process of your paper's
% publication.
	
	%\addcontentsline{toc}{chapter}{Bibliografia}

\medskip
% The data information below will be filled by AIMS editorial staff
Received xxxx 20xx; revised xxxx 20xx.
\medskip

\end{document}